\newcommand{\poly}{\mathrm{poly}}
\newcommand{\cG}{\mathcal{G}}
\newcommand{\cT}{\mathcal{T}}
\newcommand{\cM}{\mathcal{M}}
\newcommand{\diag}{\mathrm{diag}}
\newcommand{\bX}{\mathbf{x}}
\newcommand{\bY}{\mathbf{y}}
\newtheorem{theorem}{Theorem}
\newtheorem{definition}{Definition}
\newtheorem{lemma}{Lemma}
\newtheorem{cor}{Corollary}
\newcommand\myeq{\mathrel{\stackrel{\makebox[0pt]{\mbox{\normalfont\tiny def}}}{=}}}
\begin{document}

\title{Efficiently Decodable Non-Adaptive \\Threshold Group Testing\footnote{This paper was presented at the 2018 IEEE International Symposium on Information Theory.}\\[.7ex] 
  {\normalfont\large 
	Thach V. Bui\IEEEauthorrefmark{1}, Minoru Kuribayashi\IEEEauthorrefmark{3}, Mahdi Cheraghchi\IEEEauthorrefmark{4}, and Isao Echizen\IEEEauthorrefmark{1}\IEEEauthorrefmark{2}}\\[-1.5ex]}

\author{\IEEEauthorblockA{\IEEEauthorrefmark{1}SOKENDAI (The \\Graduate University \\for Advanced \\Studies), Hayama, \\Kanagawa, Japan\\ bvthach@nii.ac.jp}
\and
\IEEEauthorblockA{\IEEEauthorrefmark{3}Graduate School\\ of Natural Science\\ and Technology, \\Okayama University, \\Okayama, Japan\\kminoru@okayama-u.ac.jp}
\and
\IEEEauthorblockA{\IEEEauthorrefmark{4}Department of \\Computing, Imperial \\College London, UK\\m.cheraghchi@imperial.ac.uk}
\and
\IEEEauthorblockA{\IEEEauthorrefmark{2}National Institute\\ of Informatics, \\Tokyo, Japan \\ iechizen@nii.ac.jp}}

\maketitle

\thispagestyle{plain}
\pagestyle{plain}

\begin{abstract}
We consider non-adaptive threshold group testing for identification of up to $d$ defective items in a set of $n$ items, where a test is positive if it contains at least $2 \leq u \leq d$ defective items, and negative otherwise. The defective items can be identified using $t = O \left( \left( \frac{d}{u} \right)^u \left( \frac{d}{d - u} \right)^{d-u} \left(u \log{\frac{d}{u}} + \log{\frac{1}{\epsilon}} \right) \cdot d^2 \log{n} \right)$ tests with probability at least $1 - \epsilon$ for any $\epsilon > 0$ or $t = O \left(  \left( \frac{d}{u} \right)^u \left( \frac{d}{d -u} \right)^{d - u} d^3 \log{n} \cdot \log{\frac{n}{d}} \right)$ tests with probability 1. The decoding time is $t \times \mathrm{poly}(d^2 \log{n})$. This result significantly improves the best known results for decoding non-adaptive threshold group testing: $O(n\log{n} + n \log{\frac{1}{\epsilon}})$ for probabilistic decoding, where $\epsilon > 0$, and $O(n^u \log{n})$ for deterministic decoding.
\end{abstract}

\section{Introduction}
\label{sec:intro}

The goal of combinatorial group testing is to identify up to $d$ defective items among a population of $n$ items (usually $d$ is much smaller than $n$). This problem dates back to the work of Dorfman~\cite{dorfman1943detection}, who proposed using a pooling strategy to identify defectives in a collection of blood samples. In each test, a group of items are pooled, and the combination is tested. The result is positive if at least one item in the group is defective and is otherwise negative. Damaschke~\cite{damaschke2006threshold} introduced a generalization of classical group testing known as \textit{threshold group testing}. In this variation, the result is positive if the corresponding group contains at least $u$ defective items, where $u$ is a parameter, is negative if the group contains no more than $\ell$ defective items, where $0 \leq \ell < u$, and is arbitrary otherwise. When $u = 1$ and $\ell = 0$, threshold group testing reduces to classical group testing. We note that $\ell$ is always smaller than the number of defective items. Otherwise, every test would yield a negative outcome and no information can be extracted from the test outcomes.

There are two approaches for the design of tests. The first is \textit{adaptive group testing} in which there are several testing stages, and the design of each stage depends on the outcomes of the previous stages. The second is \textit{non-adaptive group testing} (NAGT) in which all tests are designed in advance, and the tests are performed in parallel. NAGT is appealing to researchers in most application areas, such as computational and molecular biology~\cite{farach1997group}, multiple access communications \cite{wolf1985born} and data streaming~\cite{cormode2005s} (cf.\ \cite{du2000combinatorial}). The focus of this work is on NAGT.

In both threshold and classical group testing, it is desirable to minimize the number of tests and, to efficiently identify the set of defective items (i.e., have an efficient decoding algorithm). For both testings, one needs $\Omega(d\log{n})$ tests to identify all defective items~\cite{du2000combinatorial,chen2011almost,chang2011reconstruction} using adaptive schemes. In adaptive schemes, the decoding algorithm is usually implicit in the test design. The number of tests and the decoding time are significantly different between classical non-adaptive (CNAGT) and non-adaptive threshold group testing (NATGT).

In CNAGT, Porat and Rothschild~\cite{porat2008explicit} first proposed explicit nonadaptive constructions using $O(d^2 \log{n})$ tests. However, there is no efficient (sublinear-time) decoding algorithm associated with their schemes. For exact identification, there are explicit schemes allowing defective items be identified using $\poly(d, \log{n})$ tests in time $\poly(d, \log{n})$~\cite{ngo2011efficiently,cheraghchi2013noise} (the number of tests can be as low as $O(d^{1 + o(1)} \log{n})$ if false positives are allowed in the reconstruction). To achieve a nearly optimal number of tests in adaptive group testing and with low decoding complexity, Cai et al.~\cite{cai2013grotesque} proposed using probabilistic schemes that need $O(d \log{d} \cdot \log{n})$ tests to find the defective items in time $O(d(\log{n} + \log^2{d}))$.

In threshold group testing, Damaschke~\cite{damaschke2006threshold} showed that the set of positive items can be identified with $\binom{n}{u}$ tests with up to $g$ false positives and $g$ false negatives, where $g = u - \ell - 1$ is the \textit{gap} parameter. Cheraghchi~\cite{cheraghchi2013improved} showed that it is possible to find the defective items with $O(d^{g+2} \log{d} \cdot \log(n/d))$ tests, and that this trade-off is essentially optimal. Recently, De Marco et al.~\cite{de2017subquadratic} improved this bound to $O(d^{3/2} \log(n/d))$ tests under the extra assumption that the number of defective items is exactly $d$, which is rather restrictive in application. Although the number of tests has been extensively studied, there have been few reports that focus on the decoding algorithm as well. Chen and Fu~\cite{chen2009nonadaptive} proposed schemes based on CNAGT for when $g = 0$ that can find the defective items using $O \left( \left( \frac{d}{u} \right)^u \left( \frac{d}{d-u} \right)^{d - u} d \log{\frac{n}{d}} \right)$ tests in time $O(n^u \log{n}) $. Chan et al.~\cite{chan2013stochastic} presented a randomized algorithm with $O\left( \log{\frac{1}{\epsilon}} \cdot d\sqrt{u} \log{n}\right)$ tests to find the defective items in time $O(n\log{n} + n \log{\frac{1}{\epsilon}})$ given that the number of defective items is exactly $d$, $g = 0$, and $u = o(d)$. These conditions are too strict to apply in practice. Moreover, the cost of these decoding schemes increases with $n$. Our objective is to find an efficient decoding scheme to identify up to $d$ defective items in NATGT when $g = 0$.

\textit{\textbf{Contributions:}} In this paper, we consider the case where $g = 0$, i.e., $\ell = u-1$ ($u \geq 2$), and call this model $u$-NATGT.  We first propose an efficient scheme for identifying up to $d$ defective items in NATGT in time $t \times \poly(d^2  \log{n})$, where $t$ is the number of tests. Our main idea is to create at least a specified number of rows in the test matrix such that the corresponding test in each row contains exactly $u$ defective items and such that the defective items in the rows are the defective items to be identified. We ``map'' these rows using a special matrix constructed from a disjunct matrix (defined later) and its complementary matrix, thereby converting the outcome in NATGT to the outcome in CNAGT. The defective items in each row can then be efficiently identified.

Although Cheraghchi~\cite{cheraghchi2013improved}, De Marco et al.~\cite{de2017subquadratic}, and D'yachkov et al.~\cite{d2013superimposed} proposed nearly optimal bounds on the number of tests, there are no decoding algorithms associated with their schemes. Note that the number of tests is optimal in~\cite{d2013superimposed}, i.e., $O(d^2 \log{n})$, when the threshold $u$ is a fixed constant. On the other hand, the scheme of Chen et al.~\cite{chen2009nonadaptive} requires a smaller number of tests compared with our scheme. However, the decoding complexity of their scheme is exponential in the number of items $n$, which is impractical. Chan et al.~\cite{chan2013stochastic} proposed a probabilistic approach to achieve a small number of tests, which combinatorially can be better than our scheme. However, their scheme is only applicable when the number of defective items is exactly $d$, the threshold $u$ is much smaller than $d$ ($u = o(d)$), and the decoding complexity remains high, namely $O(n \log{n} + n \log{\frac{1}{\epsilon}})$, where $\epsilon > 0$ is the precision parameter.

We present a divide and conquer scheme based on the main idea which we then instantiate via deterministic and randomized decoding. Deterministic decoding is a deterministic scheme in which all defective items can be found with probability 1. Randomized decoding reduces the number of tests; all defective items can be found with probability at least $1 - \epsilon$ for any $\epsilon > 0$. The decoding complexity is $t \times \poly(d^2 \log{n})$. A comparison with existing work is given in Table~\ref{tbl:comparison}.

\begin{table*}[ht]
\caption{Comparison with existing work.}

\begin{center}
\scalebox{1.0}{
\begin{tabular}{|l|c|c|c|c|}
\hline
Scheme & \begin{tabular}{@{}c@{}} Number of \\defective items \end{tabular} & \begin{tabular}{@{}c@{}} Number of tests \\ $t$ \end{tabular}  & Decoding complexity & Decoding type \\
\hline
Cheraghchi~\cite{cheraghchi2013improved} & $\leq d$ & $O(d^2 \log{d} \cdot \log{\frac{n}{d}})$ & $\times$ & $\times$ \\
\hline
De Marco et al.~\cite{de2017subquadratic} & $d$ & $O \left( d^2 \cdot \sqrt{\frac{d-u}{du}} \cdot \log{\frac{n}{d}} \right)$ & $\times$ & $\times$ \\
\hline
D'yachkov et al.~\cite{d2013superimposed} & $\leq d$ & $O \left( d^2 \log{n} \cdot \frac{(u-1)! 4^u}{(u - 2)^u (\ln{2})^u}\right)$ & $\times$ & $\times$ \\
\hline
Chen et al.~\cite{chen2009nonadaptive} & $\leq d$ & $O \left( \left( \frac{d}{u} \right)^u \left( \frac{d}{d-u} \right)^{d - u} d \log{\frac{n}{d}} \right)$ & $O(n^u \log{n}) $ & Deterministic \\
\hline
\textbf{Deterministic decoding}& $\leq d$ & $O \left(  \left( \frac{d}{u} \right)^u \left( \frac{d}{d -u} \right)^{d - u} d^3 \log{n} \cdot \log{\frac{n}{d}} \right)$ & $t \times \poly(d^2 \log{n})$ & Deterministic \\
\hline
Chan et al.~\cite{chan2013stochastic} & $d$ & $O\left( \log{\frac{1}{\epsilon}} \cdot d\sqrt{u} \log{n}\right)$ & $O(n\log{n} + n \log{\frac{1}{\epsilon}})$ & Random \\
\hline
\textbf{Randomized decoding} & $\leq d$ & $O \left( \left( \frac{d}{u} \right)^u \left( \frac{d}{d - u} \right)^{d-u} \left(u \log{\frac{d}{u}} + \log{\frac{1}{\epsilon}} \right) \cdot d^2 \log{n} \right)$ & $t \times \poly(d^2 \log{n})$ & Random \\
\hline
\end{tabular}}

\label{tbl:comparison}
\end{center}
\end{table*}

\section{Preliminaries}
\label{sec:pre}

For consistency, we use capital calligraphic letters for matrices, non-capital letters for scalars, bold letters for vectors, and capital letters for sets. All matrix and vector entries are binary. Here are some of the notations used:
\begin{enumerate}
\item $n, d, \bX = (x_1, \ldots, x_n)^T$: number of items, maximum number of defective items, and binary representation of $n$ items.
\item $S = \{j_1, j_2, \ldots, j_{|S|} \}$: the set of defective items; cardinality of $S$ is $|S| \leq d$.
\item $\otimes, \odot$: operation related to $u$-NATGT and CNAGT, to be defined later.
\item $\cT$: $t \times n$ measurement matrix used to identify up to $d$ defective items in $u$-NATGT, where integer $t \geq 1$ is the number of tests.
\item $\cG = (g_{ij})$: $h \times n$ matrix, where $h \geq 1$.
\item $\cM = (m_{ij})$: $k \times n$ $(d+1)$-disjunct matrix used to identify up to $u$ defective items in $u$-NATGT and $(d+1)$ defective items in CNAGT, where integer $k \geq 1$ is the number of tests.
\item $\overline{\cM} = (\overline{m}_{ij})$: the $k \times n$ complementary matrix of $\cM$; $\overline{m}_{ij} = 1 - m_{ij}$.
\item $\cT_{i, *}, \cG_{i, *}, \cM_{i,*}, \cM_j$: row $i$ of matrix $\cT$, row $i$ of matrix $\cG$, row $i$ of matrix $\cM$, and column $j$ of matrix $\cM$, respectively.
\item $\bX_i = (x_{i1}, \ldots, x_{in})^T, S_i$: binary representation of items and set of indices of defective items in row $\cG_{i, *}$. 
\item $\diag(\cG_{i, *}) = \diag(g_{i1}, \ldots, g_{in})$: diagonal matrix constructed by input vector $\cG_{i, *}$.
\end{enumerate}

\subsection{Problem definition}
\label{sub:probDef}
We index the population of $n$ items from 1 to $n$. Let $[n] = \{1, 2, \ldots, n \}$ and $S$ be the defective set, where $|S| \leq d$. A test is defined by a subset of items $P \subseteq [n]$. $(d, u, n)$-NATGT is a problem in which there are up to $d$ defective items among $n$ items. A test consisting of a subset of $n$ items is positive if there are at least $u$ defective items in the test, and each test is designed in advance. Formally, the test outcome is positive if $|P \cap S| \geq u$ and negative if $|P \cap S| < u$. 

We can model $(d, u, n)$-NATGT as follows: A $t \times n$ binary matrix $\cT =(t_{ij})$ is defined as a measurement matrix, where $n$ is the number of items and $t$ is the number of tests. Vector $\bX = (x_1,\ldots,x_n)^T$ is the binary representation vector of $n$ items, where $|\bX| \leq d$. An entry $x_j=1$ indicates that item $j$ is defective, and $x_j=0$ indicates otherwise. The $j$th item corresponds to the $j$th column of the matrix. An entry $t_{ij}=1$ naturally means that item $j$ belongs to test $i$, and $t_{ij}=0$ means otherwise. The outcome of all tests is $\bY=(y_1, \ldots, y_t)^T$, where $y_i=1$ if test $i$ is positive and $y_i=0$ otherwise. The procedure to get the outcome vector $\bY$ is called the \textit{encoding procedure}. The procedure used to identify defective items from $\bY$ is called the \textit{decoding procedure.} Outcome vector $\bY$ is

\begin{equation}
\label{eqn:thresholdGT}
\bY = \cT \otimes \bX \myeq \begin{bmatrix}
\cT_{1, *} \otimes \bX \\
\vdots \\
\cT_{t, *} \otimes \bX
\end{bmatrix} \myeq \begin{bmatrix}
y_1 \\
\vdots \\
y_t
\end{bmatrix}
\end{equation}
where $\otimes$ is a notation for the test operation in $u$-NATGT; namely, $y_i = \cT_{i, *} \otimes \bX = 1$ if $\sum_{j=1}^n x_j t_{ij} \geq u$, and $y_i = \cT_{i, *} \otimes \bX = 0$ if $\sum_{j=1}^n x_j t_{ij} < u$ for $i=1, \ldots, t$. Our objective is to find an efficient decoding scheme to identify up to $d$ defective items in $(d, u, n)$-NATGT.

\subsection{Disjunct matrices}
\label{sub:disjunct}

When $u = 1$, $u$-NATGT reduces to CNAGT. To distinguish CNAGT and $u$-NATGT, we change notation $\otimes$ to $\odot$ and use a $k \times n$ measurement matrix $\cM$ instead of the $t \times n$ matrix $\cT$. The outcome vector $\bY$ in~\eqref{eqn:thresholdGT} is equal to
\begin{equation}
\label{eqn:disjunct}
\bY = \cM \odot \bX \myeq \begin{bmatrix}
\cM_{1, *} \odot \bX \\
\vdots \\
\cM_{k, *} \odot \bX
\end{bmatrix}
\myeq \begin{bmatrix}
\bigvee_{j=1}^{n} x_j \wedge m_{1j} \\
\vdots \\
\bigvee_{j=1}^{n} x_j \wedge m_{kj}
\end{bmatrix} = \bigvee_{j=1, x_j = 1}^{n} \cM_j \myeq \begin{bmatrix}
y_1 \\
\vdots \\
y_k
\end{bmatrix}
\end{equation}
where $\odot$ is the Boolean operator for vector multiplication in which multiplication is replaced with the AND ($\wedge$) operator and addition is replaced with the OR ($\vee$) operator, and $y_i = \cM_{i, *} \odot \bX = \bigvee_{j=1}^{n} x_j \wedge m_{ij} = \bigvee_{j=1, x_j = 1}^{n} m_{ij}$ for $i = 1, \ldots, k$. 

The union of $r$ columns of $\cM$ is defined as follows: $\bigvee_{i=1}^r \cM_{j_i} = \left( \bigvee_{i=1}^r m_{1 j_i}, \ldots, \bigvee_{i=1}^r m_{t j_i} \right)^T$. A column is said to not be included in another column if there exists a row such that the entry in the first column is 1 and the entry in the second column is 0. If $\cM$ is an $(d+1)$-disjunct matrix satisfying the property that the union of up to $(d+1)$ columns does not include any remaining column, $\bX$ can always be recovered from $\bY$. We need $\cM$ to be an $(d+1)$-disjunct matrix that can be efficiently decoded, as in~\cite{cheraghchi2013noise,ngo2011efficiently}, to identify up to $d$ defective items in $u$-NATGT. A $k \times n$ strongly explicit matrix is a matrix in which the entries can be computed in time $\poly(k)$. We can now state the following theorem:
\begin{theorem}~\cite[Theorem 16]{ngo2011efficiently}
\label{thr:dDisjunct}
Let $1 \leq d \leq n$. There exists a strongly explicit $k \times n$ $(d+1)$-disjunct matrix with $k = O(d^2 \log{n})$ such that for any $k \times 1$ input vector, the decoding procedure returns the set of defective items if the input vector is the union of up to $d+1$ columns of the matrix in $\poly(k)$ time. Moreover, each column of $\cM$ can be generated in time $O(k^2 \log{N})$.
\end{theorem}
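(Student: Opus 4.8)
The statement is invoked as a black box from Ngo, Porat and Rubinstein, so the plan is to reconstruct its proof through an explicit, code-based construction of $\cM$ rather than a probabilistic one. The backbone is the Kautz--Singleton concatenation. I would start from an outer code $C \subseteq \Sigma^m$ over an alphabet $\Sigma$ of size $q$ with $|C| = n$ codewords, and realize each column of $\cM$ as a codeword: replace each of the $m$ symbols by its length-$q$ indicator vector (a single $1$ in the coordinate named by that symbol). This produces a binary matrix with $k = qm$ rows, so the entire problem of hitting $k = O(d^2\log n)$ reduces to choosing a code whose length $m$, alphabet size $q$, and recovery guarantees are simultaneously favorable. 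Strong explicitness comes for free: a column of $\cM$ is just the encoding of one message and a single entry is one symbol of one codeword, both computable in $\poly(k)$ time for explicit algebraic codes.

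For the $(d+1)$-disjunctness I would translate the combinatorial covering condition into an agreement condition on $C$. Under the indicator map, a column is ``included'' in the Boolean union of a set $T$ of at most $d+1$ other columns exactly when, in every one of the $m$ coordinate blocks, the symbol of that column coincides with the symbol of some column in $T$. Hence it suffices that no codeword agrees with the coordinatewise union of any $d+1$ codewords in all $m$ positions. If every pair of distinct codewords agrees in at most $a$ coordinates, a union of $d+1$ codewords is matched in at most $(d+1)a$ coordinates, so requiring $(d+1)a < m$ forces the property; for a Reed--Solomon outer code of dimension $\kappa$ this reads $(d+1)(\kappa-1) < m$, fixing $m = \Theta(d\kappa)$ with $\kappa = \lceil \log_q n\rceil$.

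The efficient decoder is where the choice of a list-recoverable outer code, rather than a generic distance code, becomes essential. Given the $k\times 1$ outcome vector, which is the union of at most $d+1$ columns, I would read off, block by block, the set $\Sigma_i \subseteq \Sigma$ of symbols whose indicator row is positive; each defective codeword has its $i$-th symbol in $\Sigma_i$, and $|\Sigma_i| \le d+1$. Feeding the lists $\Sigma_1,\dots,\Sigma_m$ to a $(d+1,L)$-list-recovery algorithm for $C$ returns, in time $\poly(k)$, at most $L$ candidate codewords that contain every defective column; a final pass re-simulating each candidate against the outcomes (equivalently, checking the witness row guaranteed by disjunctness) prunes the list down to exactly the defective set. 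Instantiating $C$ with an explicit, efficiently list-recoverable code in the Parvaresh--Vardy / folded-Reed--Solomon family keeps $L$ polynomially bounded while letting the rate be pushed far enough that $m = O(\log n / \log q)$ with $q = \poly(d)$, so that $k = qm = O(d^2\log n)$.

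The hard part is meeting the two targets at once: a plain Kautz--Singleton/Reed--Solomon instantiation yields disjunctness and explicitness but spends $O(d^2\log^2 n)$ rows, whereas a code meeting the Gilbert--Varshamov bound attains $O(d^2\log n)$ rows yet admits no known $\poly(k)$ decoder. Threading between the two requires a list-recoverable code whose alphabet, rate, and list size are balanced so that the indicator blowup $qm$ stays at $O(d^2\log n)$ while the list-recovery radius still tolerates input lists of size $d+1$ and emits only $L = \poly(k)$ candidates; controlling this list size (and hence the pruning cost) against the aggressive rate is the delicate point. I would expect most of the effort to go into verifying that the folded-Reed--Solomon parameters can be tuned to satisfy $(d+1)a < m$ for disjunctness while simultaneously holding both $k$ and the decoding time at their claimed bounds.
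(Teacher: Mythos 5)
You should first note that the paper you are working from never proves this statement: it is imported verbatim as a citation of Theorem~16 of Ngo, Porat and \emph{Rudra} (not Rubinstein), so the only meaningful benchmark is their proof. Your skeleton does share that proof's vocabulary --- Kautz--Singleton concatenation with an identity inner code, a list-recoverable algebraic outer code, decoding by per-block list recovery, and pruning of non-defective candidates via the disjunctness witness (your pruning step is exactly right, and it is the same trick the present paper reuses in Step~\ref{alg:eliminateFPs} of its Algorithm~\ref{alg:decodingThreshold}).

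The genuine gap is at the point you yourself flag as ``the hard part'': no tuning of folded-RS/PV parameters can make a \emph{single} concatenated matrix both $(d+1)$-disjunct and of size $k = O(d^2\log n)$. Your disjunctness condition forces outer rate below $1/(d+1)$, i.e., $m \geq (d+1)(\kappa-1)$, while every MDS-type code (RS, folded RS, PV) needs alphabet at least its length, $q \geq m$ (folding only inflates the inner identity block from $q$ to $q^s$ rows and makes matters worse). Hence $k = qm \geq m^2 \geq \bigl((d+1)(\kappa-1)\bigr)^2$, and since $\log q = O(\log k)$, this gives $k = \Omega\bigl(d^2 \log^2 n / \log^2 (d\log n)\bigr)$, which is $\omega(d^2\log n)$ whenever $d \leq 2^{o(\sqrt{\log n})}$ --- e.g., for every $d = \poly(\log n)$, precisely the sparse regime the theorem must cover. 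Indeed your stated target ``$m = O(\log n/\log q)$ with $q = \poly(d)$'' is constant outer rate, contradicting the $m = \Theta(d\kappa)$ you fixed one paragraph earlier; with rate $\leq 1/(d+1)$ and $q = \poly(d)$, the requirement $q \geq m$ already fails for large $n$. The idea you are missing is the decoupling that Ngo--Porat--Rudra actually use: they \emph{stack two matrices with different jobs}. One is an efficiently decodable $(d, O(d))$-\emph{list}-disjunct matrix built from PV/folded-RS codes; list disjunctness needs only list recovery, not the pairwise-agreement bound, so it fits in $o(d^2\log n)$ rows and produces a candidate list of size $O(d)$ in $\poly(k)$ time. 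The other is a $(d+1)$-disjunct matrix with $O(d^2\log n)$ rows taken from Porat--Rothschild's derandomized Gilbert--Varshamov-bound codes over an alphabet of size $O(d)$; it is \emph{never decoded at all} --- it is only used to verify each of the $O(d)$ candidates by column containment, which needs no decoder. This second ingredient is exactly the object your dichotomy dismisses as useless (``no known $\poly(k)$ decoder''), and it has no algebraic substitute: relative distance $1 - O(1/d)$ at rate $\Omega(1/(d\log d))$ over alphabet $O(d)$ is a GV-bound phenomenon that codes requiring $q \geq m$ cannot emulate. Relatedly, strong explicitness does not ``come for free'' in this construction; the column-generation bound $O(k^2\log N)$ in the statement is bookkeeping attached to these two specific components, not a generic property of algebraic codes.
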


\subsection{Completely separating matrix}
\label{sub:separating}
We now introduce the notion of completely separating matrices which are used to get efficient decoding algorithms for $(d, u, n)$-NATGT. An $(u, w)$-completely separating matrix is defined as follows:

\begin{definition}
\label{def:separatingMatrix}
An $h \times n$ matrix $\cG = (g_{ij})_{1 \leq i \leq h, 1 \leq j \leq n}$ is called an $(u, w)$-completely separating matrix if for any pair of subsets $I, J \subset [n]$ such that $|I| = u$, $|J| = w$, and $I \cap J = \emptyset$, there exists row $l$ such that $g_{lr} = 1$ for any $r \in I$ and $g_{ls} = 0$ for any $s \in J$. Row $l$ is called a \textit{singular} row to subsets $I$ and $J$. When $u = 1$, the matrix $\cG$ is called a $w$-disjunct matrix.
\end{definition}

This definition is slightly different from the one described by D'yachkov et al.~\cite{d2002families}. It is easy to verify that, if a matrix is an $(u, w)$-completely separating matrix, it is also an $(u, v)$-completely separating matrix for any $v \leq w$. Below we present the existence of such matrices.

\begin{theorem}
\label{thr:existingSeparatingMatrix}
Given integers $1 \leq u + w < n$, there exists an $(u, w)$-completely separating matrix of size $h \times n$, where
\begin{eqnarray}
h &=& \frac{ (u + w)^{u + w}}{u^u w^w} \left( (u + w)\log{\frac{\mathrm{e}n}{u + w}} + u\log{\frac{\mathrm{e}(u + w)}{u}} \right) + 1 \notag
\end{eqnarray}
and $\mathrm{e}$ is base of the natural logarithm.
\end{theorem}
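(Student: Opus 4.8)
The plan is to use the probabilistic method with a carefully tuned independent-entry random matrix. I would construct $\cG$ by setting each entry $g_{ij}$ independently to $1$ with probability $p$ and to $0$ with probability $1-p$, for a parameter $p \in (0,1)$ to be optimized. Fix any disjoint pair $I, J \subset [n]$ with $|I| = u$ and $|J| = w$. A given row $l$ is singular for $(I,J)$ exactly when $g_{lr} = 1$ for all $r \in I$ and $g_{ls} = 0$ for all $s \in J$; by independence this happens with probability $q \myeq p^u(1-p)^w$. The rows are independent, so the probability that \emph{no} row is singular for this fixed pair is $(1-q)^h$.

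Next I would apply a union bound over all admissible pairs. The number of such pairs is $\binom{n}{u}\binom{n-u}{w}$, so the probability that the random $\cG$ fails to be $(u,w)$-completely separating is at most $\binom{n}{u}\binom{n-u}{w}(1-q)^h$. It suffices to make this quantity strictly less than $1$, since then a separating matrix must exist. To get the sharpest bound I would choose $p$ to maximize $q = p^u(1-p)^w$; elementary calculus gives the optimizer $p = u/(u+w)$ and the value
\begin{equation}
q = \frac{u^u w^w}{(u+w)^{u+w}}, \qquad \text{so} \qquad \frac{1}{q} = \frac{(u+w)^{u+w}}{u^u w^w}. \notag
\end{equation}
This already explains the leading combinatorial factor in the stated bound.

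The step that produces the precise form of the logarithmic factor is a clean rewriting of the pair count. The key identity is
\begin{equation}
\binom{n}{u}\binom{n-u}{w} = \binom{n}{u+w}\binom{u+w}{u}, \notag
\end{equation}
which reflects that choosing $I$ then $J$ is the same as choosing the combined set of size $u+w$ and then marking which $u$ elements lie in $I$. Using the standard estimate $\binom{m}{k} \le (\mathrm{e}m/k)^k$ on both factors yields
\begin{equation}
\log\!\left( \binom{n}{u+w}\binom{u+w}{u} \right) \le (u+w)\log\frac{\mathrm{e}n}{u+w} + u\log\frac{\mathrm{e}(u+w)}{u}, \notag
\end{equation}
where $\log$ is the natural logarithm (consistent with the base-$\mathrm{e}$ terms appearing in the statement). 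Combining with the bound $(1-q)^h \le \mathrm{e}^{-qh}$, the failure probability is below $1$ as soon as $qh > \log\!\big(\binom{n}{u+w}\binom{u+w}{u}\big)$, i.e. whenever $h > \frac{1}{q}\big((u+w)\log\frac{\mathrm{e}n}{u+w} + u\log\frac{\mathrm{e}(u+w)}{u}\big)$; the additive $+1$ in the theorem simply guarantees a strict inequality at an integer value of $h$.

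I do not expect a serious obstacle here, as this is a textbook union-bound existence argument; the only points requiring care are verifying that $p = u/(u+w)$ is indeed the maximizer of $p^u(1-p)^w$ and spotting the binomial identity that converts the naive two-factor count into the $\binom{n}{u+w}\binom{u+w}{u}$ form, which is exactly what matches the asymmetric logarithmic expression in the statement rather than a looser symmetric bound. Keeping the logarithms as natural logs throughout is essential so that the $\mathrm{e}$ factors from the binomial estimates line up with the claimed coefficients.
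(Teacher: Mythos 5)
Your proposal is correct and follows essentially the same argument as the paper: an i.i.d.\ Bernoulli($p$) matrix with $p = u/(u+w)$, a union bound over the $\binom{n}{u+w}\binom{u+w}{u}$ pairs (the paper writes this count directly, while you arrive at it via the identity $\binom{n}{u}\binom{n-u}{w} = \binom{n}{u+w}\binom{u+w}{u}$, a purely cosmetic difference), and the estimates $1-x \le \mathrm{e}^{-x}$ and $\binom{a}{b} \le (\mathrm{e}a/b)^b$ to conclude that the stated $h$ makes the failure probability less than $1$. The only addition beyond the paper is that you verify $p = u/(u+w)$ maximizes $p^u(1-p)^w$, which the paper asserts implicitly by choice.
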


\begin{proof}
Consider a randomly generated $h \times n$ matrix $\cG = (g_{ij})_{1 \leq i \leq h, 1 \leq j \leq n}$ in which each entry $g_{ij}$ is assigned to 1 with probability $p$ and to 0 with probability $1 - p$. For any pair of subsets $I, J \subset [n]$ such that $|I| = u$, $|J| = w$, the probability that a row is not singular is
\begin{equation}
1 - p^u (1 - p)^w.
\end{equation}
Subsequently, the probability that there is no singular row to subsets $I$ and $J$ is
\begin{equation}
f(p) = \left( 1 - p^u (1 - p)^w \right)^h.
\end{equation}

Using a union bound, the probability that any pair of subsets $I, J \subset [n]$, where $|I| = u$, $|J| = w$, does not have a singular row; i.e., the probability that $\cG$ is not an $(u, w)$-separating matrix, is
\begin{equation}
g(p, h, u, w, n) = \binom{n}{u + w} \binom{u + w}{u} f(p) = \binom{n}{u + w} \binom{u + w}{u} \left( 1 - p^u (1 - p)^w \right)^h.
\end{equation}
To ensure that there exists an $(u, w)$-separating matrix $\cG$, one needs to find $p$ and $h$ such that $g(p, h, u, w, n) < 1$. Choosing $p = \frac{u}{u + w}$, we have:
\begin{eqnarray}
f(p) &=& \left( 1 - p^u (1 - p)^w \right)^h = \left( 1 - \left( \frac{u}{u + w} \right)^u \left(1 - \frac{u}{u + w} \right)^{w} \right)^h \nonumber \\ 
&\leq& \mathrm{exp} \left(- h \cdot \frac{u^u w^w}{(u + w)^{u + w}} \right), \label{eqn:Exist2}
\end{eqnarray}
where~\eqref{eqn:Exist2} holds because $1 - x \leq \mathrm{e}^{-x}$ for any $x > 0$. Thus we have 
\begin{alignat}{3}
&& g(p, h, u, w, n) &= \binom{n}{u + w} \binom{u + w}{u} f(p) \leq \left( \frac{\mathrm{e}n}{u + w} \right)^{u + w} \left( \frac{\mathrm{e}(u + w)}{u} \right)^{u} f(p) \label{eqn:Exist4} \\
&& &\leq \left( \frac{\mathrm{e}n}{u + w} \right)^{u + w} \left( \frac{\mathrm{e}(u + w)}{u} \right)^{u} \mathrm{exp} \left(- h \cdot \frac{u^u w^w}{(u + w)^{u + w}} \right) \label{eqn:Exist5} \\
&& &< 1 \nonumber \\ 
\Longleftrightarrow&& \left( \frac{\mathrm{e}n}{u + w} \right)^{u + w} \left( \frac{\mathrm{e}(u + w)}{u} \right)^{u} &<  \mathrm{exp} \left(h \cdot \frac{u^u w^w}{(u + w)^{u + w}} \right) \nonumber \\ 
\Longleftrightarrow&& h &> \frac{ (u + w)^{u + w}}{u^u w^w} \left( (u + w)\log{\frac{\mathrm{e}n}{u + w}} + u\log{\frac{\mathrm{e}(u + w)}{u}} \right)  \label{eqn:Exist7}
\end{alignat}
In the above, we have \eqref{eqn:Exist4} because $\binom{a}{b} \leq \left( \frac{\mathrm{e} a}{b} \right)^b$ and \eqref{eqn:Exist5} by using \eqref{eqn:Exist2}. From \eqref{eqn:Exist7}, if we choose 
\begin{align}
h &= \frac{ (u + w)^{u + w}}{u^u w^w} \left( (u + w)\log{\frac{\mathrm{e}n}{u + w}} + u\log{\frac{\mathrm{e}(u + w)}{u}} \right) + 1 \\
&= O \left( \mathrm{e}^{u + w} (u + w) \log{\frac{n}{u + w}} \right), \mbox{ because } u + w < n,
\end{align}
then $g(p, h, u, w, n) < 1$; i.e., there exists an $(u, w)$-completely separating matrix of size $h \times n$.
\end{proof}

Suppose that $\cG$ is an $h \times n$ $(u, w)$-completely separating matrix. If $w$ is set to $d-u$, then \textbf{every} $h \times d$ submatrix, which is constructed by any $d$ columns, is an $(u, d- u)$-completely separating matrix. This property is too strong and increases the number of rows in $\cG$. To reduce the number of rows, we relax this property to a ``for-each'' guarantee as follows: \textbf{each} $h \times d$ submatrix, which is constructed by $d$ columns of $\cG$, is an $(u, d- u)$-completely separating matrix with high probability. The following corollary describes this idea in more detail.

\begin{cor}
\label{cor:rndMatrix}
Let $u, d, n$ be any given positive integers such that $1 \leq u < d < n$. For any $\epsilon > 0$, there exists a random $h \times n$ matrix such that for each $h \times d$ submatrix, which is constructed picking a set of $d$ columns, is an $(u, d- u)$-completely separating matrix with probability at least $1 - \epsilon$, where
\begin{eqnarray}
h &=&  \left( \frac{d}{u} \right)^u \left( \frac{d}{d - u} \right)^{d-u} \left(u \log{\frac{\mathrm{e}d}{u}} + \log{\frac{1}{\epsilon}} \right) \nonumber
\end{eqnarray}
and $\mathrm{e}$ is base of the natural logarithm.
\end{cor}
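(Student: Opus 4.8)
The plan is to reuse the probabilistic construction from the proof of Theorem~\ref{thr:existingSeparatingMatrix}, but to replace the global union bound over all column pairs in $[n]$ by a local union bound over a single fixed block of $d$ columns. First I would generate the random $h \times n$ matrix $\cG$ in the same way, assigning each entry to $1$ independently with probability $p$, then fix an arbitrary set of $d$ columns and let $\cG'$ denote the induced $h \times d$ submatrix. The goal becomes showing that $\cG'$ is $(u, d-u)$-completely separating with probability at least $1 - \epsilon$ over the random draw of $\cG$.

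The key structural observation is that, inside a universe of only $d$ columns, a $u$-subset $I$ and a disjoint $(d-u)$-subset $J$ must satisfy $J = [d] \setminus I$, since $|I| + |J| = d$. Hence there are only $\binom{d}{u}$ relevant pairs $(I, J)$ to separate, rather than the $\binom{n}{u+w}\binom{u+w}{u}$ pairs that drive the bound in Theorem~\ref{thr:existingSeparatingMatrix}. This collapse of the union-bound count from a polynomial in $n$ to a quantity independent of $n$ is exactly what removes the $\log n$-type factor and makes the ``for-each'' guarantee so much cheaper than the ``for-all'' guarantee.

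Next I would set $p = \frac{u}{d}$ (the same choice $p = \frac{u}{u+w}$ with $w = d-u$) and bound the per-pair failure probability exactly as in~\eqref{eqn:Exist2}: the probability that no row is singular to a fixed $(I, J)$ is $\left(1 - p^u(1-p)^{d-u}\right)^h \leq \exp\!\left(-h\, u^u (d-u)^{d-u} / d^d\right)$. Taking a union bound over the $\binom{d}{u}$ pairs and applying the inequality $\binom{d}{u} \leq (\mathrm{e}d/u)^u$ used in~\eqref{eqn:Exist4}, the probability that $\cG'$ fails to be $(u, d-u)$-completely separating is at most $(\mathrm{e}d/u)^u \exp\!\left(-h\, u^u (d-u)^{d-u}/d^d\right)$.

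Finally I would force this quantity to be at most $\epsilon$ by taking logarithms, yielding the requirement $h \cdot u^u(d-u)^{d-u}/d^d \geq u\log\frac{\mathrm{e}d}{u} + \log\frac{1}{\epsilon}$, and then rewrite $d^d / \bigl(u^u (d-u)^{d-u}\bigr) = \left(\frac{d}{u}\right)^u \left(\frac{d}{d-u}\right)^{d-u}$ to recover the stated value of $h$. There is no genuine obstacle beyond bookkeeping: the only point that requires care is the counting observation that $J$ is forced to be the complement of $I$ within the $d$-column block, since this is precisely what shrinks the union-bound factor relative to the previous theorem and produces the claimed bound.
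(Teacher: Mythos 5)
Your proposal is correct and follows essentially the same route as the paper's own proof: the same random matrix with entry probability $p = \frac{u}{d}$, the same union bound over the $\binom{d}{u}$ pairs within the fixed $d$-column block, the same inequalities $1 - x \leq \mathrm{e}^{-x}$ and $\binom{d}{u} \leq \left(\frac{\mathrm{e}d}{u}\right)^u$, and the same final algebra recovering the stated $h$. The only difference is presentational: you make explicit the observation that $J$ is forced to be the complement of $I$ inside the block (which collapses the union-bound count and removes the dependence on $n$), a point the paper uses implicitly without comment.
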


\begin{proof}
Consider a random $h \times n$ matrix $\cG = (g_{ij})_{1 \leq i \leq h, 1 \leq j \leq n}$ in which each entry $g_{ij}$ is assigned to 1 with probability of $\frac{u}{d}$ and to 0 with probability of $1 - \frac{u}{d}$. Our task is to prove that \textit{each} $h \times d$ matrix $\cG^\prime$, constructed by $d$ columns of $\cG$, is an $(u, d - u)$-completely separating matrix with probability at least $1 - \epsilon$ for any $\epsilon > 0$. Specifically, we prove that $h = \left( \frac{d}{u} \right)^u \left( \frac{d}{d - u} \right)^{d-u} \left(u \log{\frac{\mathrm{e}d}{u}} + \log{\frac{1}{\epsilon}} \right)$ is sufficient to achieve such $\cG^\prime$. Similar to the proof in Theorem~\ref{thr:existingSeparatingMatrix}, the probability that $\cG^\prime$ is not an $(u, d - u)$-completely separating matrix up to $\epsilon$ is
\begin{alignat}{3}
&& \binom{d}{u} \left(1 - \left( \frac{u}{d} \right)^u \left(1 - \frac{u}{d} \right)^{d - u} \right)^h &\leq \left( \frac{\mathrm{e}d}{u} \right)^{u} \mathrm{exp} \left( -h \left( \frac{u}{d} \right)^u \left( \frac{d - u}{d} \right)^{d - u} \right) \leq \epsilon \label{eqn:cor1} \\
\Longleftrightarrow&& \frac{1}{\epsilon} \left( \frac{\mathrm{e}d}{u} \right)^{u} &\leq  \mathrm{exp} \left(h \left( \frac{u}{d} \right)^u \left( \frac{d - u}{d} \right)^{d-u} \right) \nonumber \\
\Longleftrightarrow&& h &\geq \left( \frac{d}{u} \right)^u \left( \frac{d}{d - u} \right)^{d-u} \left(u \log{\frac{\mathrm{e}d}{u}} + \log{\frac{1}{\epsilon}} \right) \label{eqn:cor4}
\end{alignat}
We get \eqref{eqn:cor1} because $1 - x \leq e^{-x}$ for any $x > 0$ and $\binom{a}{b} \leq \left( \frac{\mathrm{e} a}{b} \right)^b$. This completes the proof.
\end{proof}

\section{Proposed scheme}
\label{sec:DC}
The basic idea of our scheme, which uses a divide and conquer strategy, is to create at least $\kappa$ rows of matrix $\cG$, e.g., $i_1, i_2, \ldots, i_\kappa$ such that $|S_{i_1}| = \cdots = |S_{i_\kappa}| = u$ and $S_{i_1} \cup \ldots \cup S_{i_\kappa} = S$. Then we ``map'' these rows by using a special matrix that enables us to convert the outcome in NATGT to the outcome in CNAGT. The defective items in each row can then be efficiently identified. We present a particular matrix that achieves efficient decoding for each row in the following section. This idea is illustrated in Fig.~\ref{fig:Proposed}.

\begin{figure}[ht]
\centering
  \includegraphics[scale=0.5]{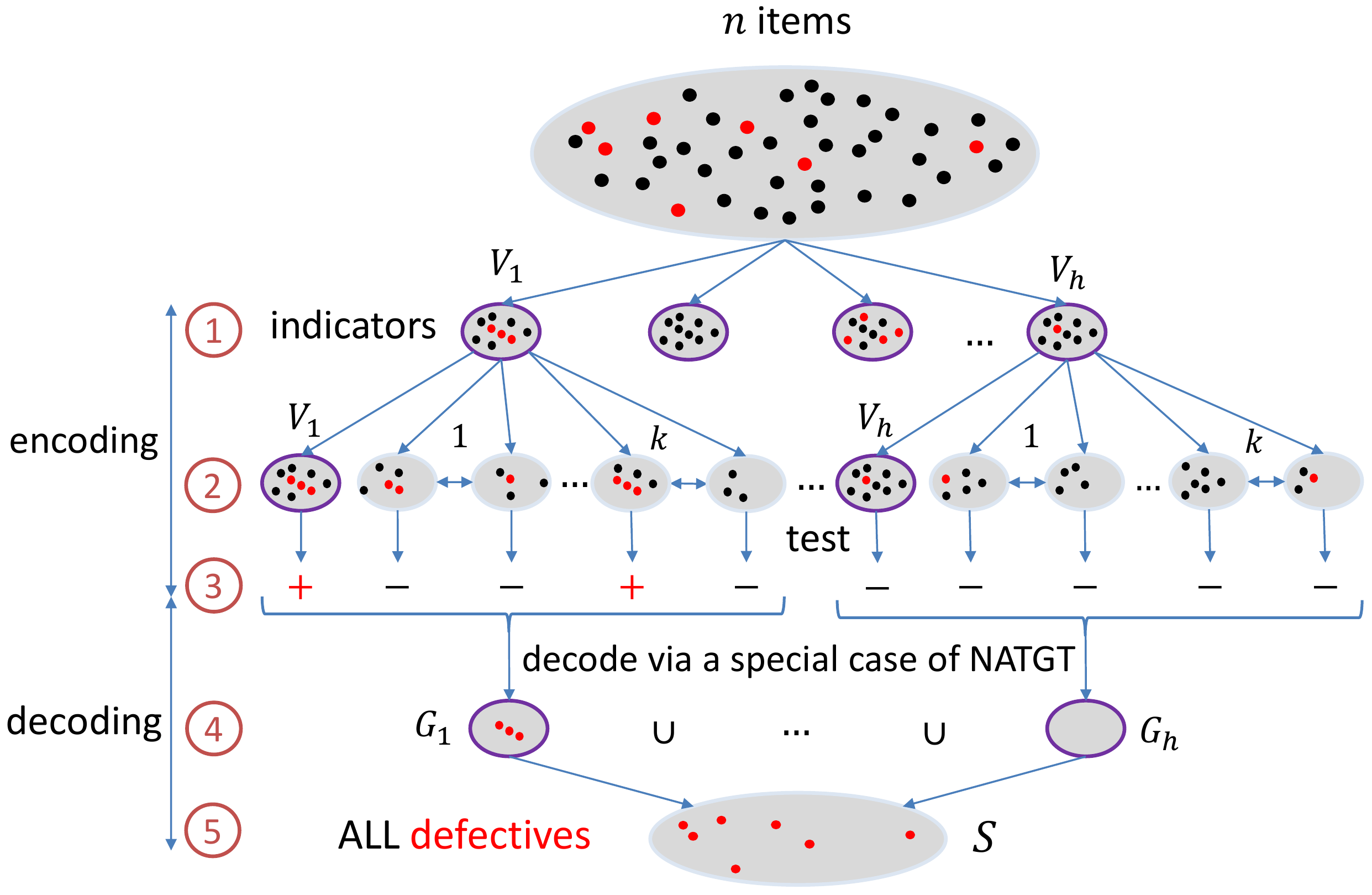}
  \caption{An illustration of the proposed scheme.}
  \label{fig:Proposed} 
\end{figure}

In Fig.~\ref{fig:Proposed}, the defective and negative items are represented as red and black dots, respectively. There are five steps in the proposed scheme. The encoding procedure includes Steps 1, 2, and 3. Steps 4 and 5 correspond to the decoding procedure. A row $\cG_{i, *}$ can be represented by a support set $V_i = \{j \mid g_{ij} = 1 \}$. Let $\cM = (m_{i^\prime j})$ be a $k \times n$ $(d + 1)$-disjunct matrix and $M_{i^\prime} = \{ j \mid m_{i^\prime j} = 1\}$ be the support set of $\cM_j$ for $i^\prime = 1, \ldots, k$ and $j = 1, \ldots, n$.

In the encoding procedure, we create $h$ ``indicating subsets'' $V_1, \ldots, V_h$ in Step 1 in which $V_{i_l} \cap S_{i_l} = S_{i_l}$ for $l = 1, \ldots, \kappa$. Our objective is to extract $S_{i_l}$ from $V_{i_l}$ efficiently. In Step 2, each subset $V_i$ is mapped to $2k + 1$ subsets. They are the $V_i$ and $k$ dual subsets in which each dual subset ($V_i \cap M_{i^\prime}$ and $V_i \setminus V_i \cap M_{i^\prime}$ for $i^\prime = 1, \ldots, k$) is a partition of a $V_i$ created from $\cM$. Step 3 simply gets the outcomes of all tests generated in Step 2.

In the decoding procedure, Step 4 gets the defective set $G_i$ from the $2k + 1$ subsets created from $V_i$ as an instance of NATGT, for $i = 1, \ldots, h$. As a result, the cardinality of $G_i$ is either $u$ or 0. Finally, the defective set $S$ is the union of $G_1, \ldots, G_h$ in Step 5.

\subsection{When the number of defective items equals the threshold}
\label{sub:specialCase}

In this section, we consider a special case in which the number of defective items equals the threshold, i.e., $|\bX| = u$. Given a measurement matrix $\cM$ and a representation vector of $u$ defective items $\bX$ ($|\bX| = u$), what we observe is $\bY \myeq \cM \otimes \bX \myeq (y_1, \ldots, y_k)^T$. Our objective is to recover $\bY^\prime \myeq \cM \odot \bX \myeq (y^\prime_1, \ldots, y^\prime_k)^T$ from $\bY$. Then $\bX$ can be recovered if we choose $\cM$ as an $(d+1)$-disjunct matrix described in Theorem~\ref{thr:dDisjunct}. To achieve this goal, we create a measurement matrix:
\begin{equation}
\label{eqn:elementaryMatrix}
\mathcal{A} \myeq \begin{bmatrix}
\cM \\
\overline{\cM}
\end{bmatrix}
\end{equation}
where $\cM = (m_{ij})$ is a $k \times n$ $(d+1)$-disjunct matrix as described in Theorem~\ref{thr:dDisjunct} and $\overline{\cM} = (\overline{m}_{ij})$ is the complement matrix of $\cM$, $\overline{m}_{ij} = 1 - m_{ij}$ for $i = 1, \ldots, k$ and $j = 1, \ldots, n$. We note that $\cM$ can be decoded in time $\poly(k) = \poly(d^2 \log{n})$ because $k = O(d^2 \log{n})$. Let us assume that the outcome vector is $\mathbf{z}$. Then we have: 
\begin{equation}
\label{eqn:special1}
\mathbf{z} \myeq \mathcal{A} \otimes \bX \myeq \begin{bmatrix}
\cM \otimes \bX \\
\overline{\cM} \otimes \bX
\end{bmatrix} \myeq \begin{bmatrix}
\bY \\
\overline{\bY}
\end{bmatrix}
\end{equation}
where $\bY = \cM \otimes \bX = (y_1, \ldots, y_k)^T$ and $\overline{\bY} = \overline{\cM} \otimes \bX = (\overline{y}_1, \ldots, \overline{y}_k)^T$. The following lemma shows that $\bY^\prime = \cM \odot \bX$ is always obtained from $\mathbf{z}$; i.e., vector $\bX$ can always be recovered.

\begin{lemma}
\label{lem:specialU}
Given integers $2 \leq u \leq d < n$, there exists a strongly explicit $2k \times n$ matrix such that if there are exactly $u$ defective items among $n$ items in $u$-NATGT, the $u$ defective items can be identified in time $\poly(k)$, where $k = O(d^2 \log{n})$.
\end{lemma}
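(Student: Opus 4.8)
The plan is to reduce the threshold problem to ordinary (OR-based) group testing: recover the classical outcome $\bY^\prime = \cM \odot \bX$ from the observed threshold outcome $\mathbf{z} = (\bY, \overline{\bY})^T$, and then feed $\bY^\prime$ into the disjunct decoder of Theorem~\ref{thr:dDisjunct}. The single fact that drives everything is that, because $\overline{m}_{ij} = 1 - m_{ij}$ and there are \emph{exactly} $u$ defectives, each row splits the defectives perfectly between $\cM$ and $\overline{\cM}$: writing $a_i = \sum_{j} x_j m_{ij}$ and $\overline{a}_i = \sum_j x_j \overline{m}_{ij}$, one has $a_i + \overline{a}_i = \sum_j x_j = u$ for every row $i$.

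First I would translate each threshold bit. Since $a_i \le u$, the top block satisfies $y_i = 1 \iff a_i = u$, and since $\overline{a}_i \le u$, the bottom block satisfies $\overline{y}_i = 1 \iff \overline{a}_i = u \iff a_i = 0$. The OR outcome we want is $y^\prime_i = 1 \iff a_i \ge 1$. Combining these, $y^\prime_i = 0 \iff a_i = 0 \iff \overline{y}_i = 1$, so the recovery rule is simply $y^\prime_i = 1 - \overline{y}_i$ for every $i$; equivalently, $\bY^\prime$ is the bitwise complement of $\overline{\bY}$. This is exactly where the hypothesis $|\bX| = u$ is essential: it is the equality $a_i + \overline{a}_i = u$ that lets a threshold-$u$ test on the \emph{complemented} row detect the mere presence of a defective, something a single threshold test on $\cM$ cannot do once $u \ge 2$ (a row with $0 < a_i < u$ reads $y_i = 0$ yet has $y^\prime_i = 1$).

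With $\bY^\prime = \cM \odot \bX$ in hand, I would finish by decoding. Since $\cM$ is $(d+1)$-disjunct and $\bX$ has weight $u \le d$, the vector $\bY^\prime$ is the Boolean union of at most $d+1$ columns of $\cM$, so the decoder of Theorem~\ref{thr:dDisjunct} returns $S$ exactly in $\poly(k)$ time. For the bookkeeping: $\cM$ is strongly explicit with $k = O(d^2 \log n)$ rows, and $\overline{\cM}$ is obtained by flipping bits, so $\mathcal{A}$ is a strongly explicit $2k \times n$ matrix; forming $\bY^\prime$ from $\overline{\bY}$ costs $O(k)$, which is dominated by the $\poly(k)$ decoding.

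I do not anticipate a genuine obstacle here: the content is the complementation identity rather than any hard estimate. The one point that needs care is the direction of the threshold inequalities (establishing $y_i = 1 \iff a_i = u$ and $\overline{y}_i = 1 \iff a_i = 0$, not their negations), together with the observation that only the bottom block $\overline{\cM}$ is actually used to reconstruct $\bY^\prime$ in this exactly-$u$ regime; the top block $\cM$ is carried along because the full matrix $\mathcal{A}$ is reused as a primitive in the general divide-and-conquer scheme, not because it is needed for this lemma.
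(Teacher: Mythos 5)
Your proof is correct and takes essentially the same route as the paper: the same stacked matrix $\mathcal{A}$ built from $\cM$ and its complement $\overline{\cM}$, the same conversion of the threshold outcome into the classical OR outcome $\bY^\prime = \cM \odot \bX$, and the same appeal to the $(d+1)$-disjunct decoder of Theorem~\ref{thr:dDisjunct}. Your single rule $y^\prime_i = 1 - \overline{y}_i$, derived from the identity $a_i + \overline{a}_i = u$, is a correct and slightly cleaner consolidation of the paper's three-case rule (the paper's cases collapse to exactly this complementation once one notes $y_l = 1$ forces $\overline{y}_l = 0$), so the difference is purely presentational.
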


\begin{proof}
We construct the measurement matrix $\mathcal{A}$ in \eqref{eqn:elementaryMatrix} and assume that $\mathbf{z}$ is the observed vector as in \eqref{eqn:special1}. Our task is to create vector $\bY^\prime = \cM \odot \bX$ from $\mathbf{z}$. One can get it using the following rules, where $l = 1, 2, \ldots, k$: 
\begin{enumerate}
\item If $y_l = 1$, then $y^\prime_l = 1$.
\item If $y_l = 0$ and $\overline{y}_l = 1$, then $y^\prime_l = 0$.
\item If $y_l = 0$ and $\overline{y}_l = 0$, then $y^\prime_l = 1$.
\end{enumerate}
We now prove the correctness of the above rules. Because $y_l = 1$, there are at least $u$ defective items in row $\cM_{l, *}$. Then, the first rule is implied.

If $y_l = 0$, there are less than $u$ defective items in row $\cM_{l, *}$. Because $|\bX| = u$, we have $\overline{y}_l = 1$, and the threshold is $u$, there must be $u$ defective items in row $\overline{\cM}_{l, *}$. Moreover, since $\overline{\cM}_{l, *}$ is the complement of $\cM_{l, *}$, there must be no defective item in test $l$ of $\cM$. Therefore, we have $y^\prime_l = 0$, and the second rule is implied.

If $y_l = 0$, there are less than $u$ defective items in row $\cM_{l, *}$. Similarly, if $\overline{y}_l = 0$, there are less than $u$ defective items in row $\overline{\cM}_{l, *}$. Because $\overline{\cM}_{l, *}$ is the complement of $\cM_{l, *}$, the number of defective items in row $\cM_{l, *}$ or $\overline{\cM}_{l, *}$ cannot be equal to zero, since either $y_l$ would equal $1$ or $\overline{y}_l$ would equal $1$. Since the number of defective items in row $\cM_{l, *}$ is not equal to zero, the test outcome is positive, i.e., $\bY^\prime_{l} = 1$. The third rule is thus implied.

Since we get $\bY^\prime = \cM \odot \bX$, the matrix $\cM$ is an $(d+1)$-disjunct matrix and $u \leq d$, $u$ defective items can be identified in time $\poly(k)$ by Theorem~\ref{thr:dDisjunct}.
\end{proof}

\textit{\textbf{Example:}} We demonstrate Lemma~\ref{lem:specialU} by setting $u = d = 2$, $k = 9$, and $n = 12$ and defining a $9 \times 12$ 2-disjunct matrix $\cM$ with the first two columns as follows:
\begin{equation}
\label{eqn:exSpecial}
\cM = \left[
\begin{tabular}{cccccccccccc}
0 & 0 & 0 & 0 & 0 & 0 & 1 & 1 & 1 & 1 & 0 & 0 \\
0 & 0 & 0 & 1 & 1 & 1 & 0 & 0 & 0 & 1 & 0 & 0 \\
1 & 1 & 1 & 0 & 0 & 0 & 0 & 0 & 0 & 1 & 0 & 0 \\
0 & 0 & 1 & 0 & 0 & 1 & 0 & 0 & 1 & 0 & 1 & 0 \\
0 & 1 & 0 & 0 & 1 & 0 & 0 & 1 & 0 & 0 & 1 & 0 \\
1 & 0 & 0 & 1 & 0 & 0 & 1 & 0 & 0 & 0 & 1 & 0 \\
0 & 1 & 0 & 1 & 0 & 0 & 0 & 0 & 1 & 0 & 0 & 1 \\
0 & 0 & 1 & 0 & 1 & 0 & 1 & 0 & 0 & 0 & 0 & 1 \\
1 & 0 & 0 & 0 & 0 & 1 & 0 & 1 & 0 & 0 & 0 & 1 
\end{tabular}
\right], \bY = \begin{bmatrix}
0 \\
0 \\
1 \\
0 \\
0 \\
0 \\
0 \\
0 \\
0
\end{bmatrix}, \overline{\bY} = \begin{bmatrix}
1 \\
1 \\
0 \\
1 \\
0 \\
0 \\
0 \\
1 \\
0
\end{bmatrix}, \bY^\prime = \begin{bmatrix}
0 \\
0 \\
1 \\
0 \\
1 \\
1 \\
1 \\
0 \\
1
\end{bmatrix}
\end{equation}

Assume that the defective items are 1 and 2, i.e., $\bX = [1, 1, 0, 0, 0, 0, 0, 0, 0]^T$; then the observed vector is $\mathbf{z} = [\bY^T \ \overline{\bY}^T]^T$. Using the three rules in the proof of Lemma~\ref{lem:specialU}, we obtain vector $\bY^\prime$. We note that $\bY^\prime = \cM_1 \bigvee \cM_2 = \cM \odot \bX$. Using a decoding algorithm (which is omitted in this example), we can identify items 1 and 2 as defective items from $\bY^\prime$.

\subsection{Encoding procedure}
\label{sub:enc}

To implement the divide and conquer strategy, we need to divide the set of defective items into small subsets such that defective items in those subsets can be effectively identified. We define $\kappa = \left\lceil \frac{|S|}{u} \right\rceil \geq 1$ as an integer, and create an $h \times n$ matrix $\cG$ containing $\kappa$ rows, denoted as $i_1, i_2, \ldots, i_\kappa$, with probability at least $1 - \epsilon$ such that (i) $|S_{i_1}| = \cdots = |S_{i_\kappa}| = u$ and (ii) $S_{i_1} \cup \ldots \cup S_{i_\kappa} = S$ for any $\epsilon \geq 0$ where $S_i$ is the set of indices of defective items in row $\cG_{i, *}$. For example, if $n = 6$, the defective items are 1, 2, and 3, and $\cG_{1, *} = (1, 0, 1, 0, 1, 1)$, then $S_1 = \{1, 3 \}$. These conditions guarantee that all defective items will be included in the decoded set.

To achieve such a $\cG$, for any $|S| \leq d$, a pruning matrix $\cG^\prime$ of size $h \times d$ after removing $n-d$ columns $\cG_x$ for $x \in [n] \setminus S$ must be an $(u, d-u)$-completely separating matrix with high probability. From Definition~\ref{def:separatingMatrix}, the matrix $\cG^\prime$ is also an $(u, |S| - u)$-completely separating matrix. Then, the $\kappa$ rows are chosen as follows. We choose a collection of sets of defective items: $P_l = \{j_{(l-1)u + 1}, \ldots, j_{lu} \}$ for $l = 1, \ldots, \kappa - 1$. $P^\prime$ is a set satisfying $P^\prime \subseteq \cup_{l=1}^{\kappa - 1}P_l$ and $|P^\prime| = \kappa u - |S|$. Then we pick the last set as follows: $P_\kappa = \left( S \setminus \cup_{l=1}^{\kappa - 1}P_l \right) \cup P^\prime$. From Definition~\ref{def:separatingMatrix}, for any $P_l$, there exists a row, denoted $i_l$, such that $g_{i_l x} = 1$ for $x \in P_l$ and $g_{i_l y} = 0$ for $y \in S \setminus P_l$, where $l = 1, \ldots, \kappa$. Then, we have $S_{i_l} = P_l$ and row $i_l$ is singular to sets $S_{i_l}$ and $S \setminus S_{i_l}$ for $l = 1, \ldots, \kappa$. Condition (i) thus holds. Condition (ii) also holds because $\cup_{l=1}^{\kappa}S_{i_l} = \cup_{l=1}^{\kappa}P_l = S$. The matrix $\cG$ is specified in section~\ref{sec:main}.

After creating the matrix $\cG$, we generate matrix $\mathcal{A}$ as in \eqref{eqn:elementaryMatrix}. Then the final measurement matrix $\cT$ of size $(2k + 1)h \times n$ is created as follows:

\begin{equation}
\label{eqn:meausrementMatrix}
\cT = \begin{bmatrix}
\cG_{1, *} \\
\mathcal{A} \times \diag(\cG_{1, *}) \\
\vdots \\
\cG_{h, *} \\
\mathcal{A} \times \diag(\cG_{h, *})
\end{bmatrix}
= \begin{bmatrix}
\cG_{1, *} \\
\cM \times \diag(\cG_{1, *}) \\
\overline{\cM} \times \diag(\cG_{1, *}) \\
\vdots \\
\cG_{h, *} \\
\cM \times \diag(\cG_{h, *}) \\
\overline{\cM} \times \diag(\cG_{h, *})
\end{bmatrix}
\end{equation}

The vector observed using $u$-NATGT after performing the tests given by the measurement matrix $\cT$ is
\begin{eqnarray}
\bY = \cT \otimes \bX &=& \begin{bmatrix}
\cG_{1, *} \\
\mathcal{A} \times \diag(\cG_{1, *}) \\
\vdots \\
\cG_{h, *} \\
\mathcal{A} \times \diag(\cG_{h, *})
\end{bmatrix} \otimes \bX
= \begin{bmatrix}
\cG_{1, *} \otimes \bX \\
\mathcal{A} \otimes \bX_1 \\
\vdots \\
\cG_{h, *} \otimes \bX \\
\mathcal{A} \otimes \bX_h
\end{bmatrix} \notag \\
&=& \begin{bmatrix}
\cG_{1, *} \otimes \bX\\
\cM \otimes \bX_1  \\
\overline{\cM} \otimes \bX_1 \\
\vdots \\
\cG_{h, *} \otimes \bX\\
\cM \otimes \bX_h \\
\overline{\cM} \otimes \bX_h \\
\end{bmatrix}
= \begin{bmatrix}
y_1 \\
\bY_1 \\
\overline{\bY}_1 \\
\vdots \\
y_h \\
\bY_h \\
\overline{\bY}_h
\end{bmatrix}
= \begin{bmatrix}
y_1 \\
\mathbf{z}_1 \\
\vdots \\
y_h \\
\mathbf{z}_h
\end{bmatrix} \label{eqn:encoding}
\end{eqnarray}
where $\bX_i = \diag(\cG_{i, *}) \times \bX$, $y_i = \cG_{i, *} \otimes \bX$, $\bY_i = \cM \otimes \bX_i \myeq (y_{i1}, \ldots, y_{ik})^T$, $\overline{\bY}_i = \overline{\cM} \otimes \bX_i \myeq (\overline{y}_{i1}, \ldots, \overline{y}_{ik})^T$, and $\mathbf{z}_i = [\bY_i^T \ \overline{\bY}_i^T]^T$ for $i = 1, 2, \ldots, h$.

We note that $\bX_i$ is the vector representing the defective items corresponding to row $\cG_{i, *}$. If $\bX_i = (x_{i1}, x_{i2}, \ldots, x_{in})^T$, then $S_i = \{ l \mid x_{il} = 1, l \in [n] \}$. We thus have $|S_i| = |\bX_i| \leq d$. Moreover, the condition $y_i = 1$ holds if and only if $|\bX_i| \geq u$.

\subsection{The decoding procedure}
\label{sub:dec}

The decoding procedure is summarized as Algorithm~\ref{alg:decodingThreshold}, where $\bY_i^\prime = (y_{i1}^\prime, \ldots, y_{ik}^\prime)^T$ is presumed to be $\cM \odot \bX_i$. The procedure is briefly explained as follows: Step~\ref{alg:scan} enumerates the $h$ rows of $\cG$. Step~\ref{alg:checkPositive} checks if there are at least $u$ defective items in row $\cG_{i, *}$. Steps~\ref{alg:convert2CNAGT} to~\ref{alg:endConverting2CNAGT} calculate $\bY_i^\prime$, and Step~\ref{alg:eliminateFPs} checks if all items in $G_i$ are truly defective and adds them into $S$. 

\begin{algorithm}
\caption{Decoding procedure for $u$-NATGT}
\label{alg:decodingThreshold}
\textbf{Input:} Outcome vector $\bY$, $\cM$.\\
\textbf{Output:} The set of defective items $S$.

\begin{algorithmic}[1]
\STATE $S = \emptyset$. \label{alg:init}
\FOR {$i=1$ \TO $h$} \label{alg:scan}
	\IF {$y_i = 1$} \label{alg:checkPositive}
		\FOR {$l = 1$ \TO $k$} \label{alg:convert2CNAGT}
			\IF {$y_{il} = 1$} \STATE $y^\prime_{il} = 1$ \ENDIF
			\IF {$y_{il} = 0$ and $\overline{y}_{il} = 1$} \STATE $y^\prime_{il} = 0$ \ENDIF
			\IF {$y_{il} = 0$ and $\overline{y}_{il} = 0$} \STATE $y^\prime_{il} = 1$ \ENDIF
		\ENDFOR \label{alg:endConverting2CNAGT}
		\STATE Decode $\bY^\prime_i$ using $\cM$ to get the defective set $G_i$. \label{alg:decodeAll}
		\IF {$|G_i| = u$ and $\bigvee_{j \in G_i} \cM_j \equiv \bY^\prime_i$} \label{alg:eliminateFPs}
			\STATE $S = S \cup G_i$.
		\ENDIF \label{alg:endEliminateFPs}
	\ENDIF	
\ENDFOR
\STATE Return $S$. \label{alg:defectiveSet}
\end{algorithmic}
\end{algorithm}

\subsection{Correctness of the decoding procedure}
\label{sub:DC}
Recall that our objective is to recover $\bX_i$ from $y_i$ and $\mathbf{z}_i = \begin{bmatrix} \bY_i \\ \overline{\bY}_i \end{bmatrix}$ for $i = 1, 2, \ldots, h$. Step~\ref{alg:scan} enumerates the $h$ rows of $\cG$. We have that $y_i$ is the indicator for whether there are at least $u$ defective items in row $\cG_{i, *}$. If $y_i = 0$, it implies that there are less than $u$ defective items in row $\cG_{i, *}$. Since we only focus on row $\cG_{i, *}$ which has exactly $u$ defective items, vector $\mathbf{z}_i$ is not considered if $y_i = 0$. This is achieved by Step~\ref{alg:checkPositive}.

When $y_i = 1$, there are at least $u$ defective items in row $\cG_{i, *}$. If there are exactly $u$ defective items in this row, they are always identified as described by Lemma~\ref{lem:specialU}. Our task is now to prevent false defectives by decoding $\bY^\prime_i$.

Steps~\ref{alg:convert2CNAGT} to~\ref{alg:endConverting2CNAGT} calculate $\bY^\prime_i$ from $\mathbf{z}_i$. If there are exactly $u$ defective items in row $\cG_{i, *}$, we have $\bY_i^\prime = \cM \otimes \bX_i$ and $|\bX_i| = u$. If there are more than $u$ defective items in row $\cG_{i, *}$, vector $\bY_i^\prime = \cM \otimes \bX_i^\prime$ for some vector $\bX_i^\prime \in \{0, 1 \}^{n}$ after implementing Steps~\ref{alg:convert2CNAGT} to~\ref{alg:endConverting2CNAGT}. In the latter case, we may not recover $\bX_i^\prime$ correctly. Moreover, it is not clear whether the non-zero entries in $\bX_i^\prime$ are necessarily the indices of defective items. Therefore, our task is to decode $\bY^\prime_i$ using matrix $\cM$ to get the defective set $G_i$, and then validate whether all items in $G_i$ are defective.

There exists at least $\kappa$ rows of $\cG$ in which there are exactly $u$ defective items, and we need to identify all defective items in these rows. Therefore, we only consider the case when the number of defective items obtained from decoding $\bY^\prime_i$ is equal to $u$; i.e., $|G_i| = u$. Our task is now to avoid false positives, which is accomplished by Step~\ref{alg:eliminateFPs}. There are two sets of defective items corresponding to $\mathbf{z}_i$: the first one is the true set, which is $S_i$ and is \textit{unknown}, and the second one is $G_i$, which is expected to be $S_i$ (albeit not surely) and $|G_i| = u$. Note that $|S_i| \geq u$ because $y_i = 1$. If $G_i \equiv S_i$, we can always identify $u$ defective items and the condition in Step~\ref{alg:eliminateFPs} always holds because of Lemma~\ref{lem:specialU}. We need to consider the case $G_i \not\equiv S_i$; i.e., when there are more than $u$ defective items in row $\cG_{i, *}$. We break down this case into two categories:
\begin{enumerate}
\item When $|G_i \setminus S_i| = 0$: in this case, all elements in $G_i$ are defective items. We do not need to consider whether $\bigvee_{j \in G_i} \cM_j \equiv \bY^\prime_i$. If this condition holds, we obtain the true defective items. If it does not hold, we do not take $G_i$ into the set of defective items.
\item When $|G_i \setminus S_i| \neq 0$: in this case, we prove that $\bigvee_{j \in G_i} \cM_j \equiv \bY^\prime_i$ does not hold; i.e., none of the elements in $G_i$ are added to the defective set. Consider any $j_1 \in G_i \setminus S_i$ and $j_2 \in G_i \setminus \{j_1 \}$. Since $|S_i| \leq d$ and $\cM$ is an $(d+1)$-disjunct matrix, there exists a row, denoted $\tau$, such that $m_{\tau j_1} = 1, m_{\tau j_2} = 0$, and $m_{\tau x} = 0$ for $x \in S_i$. On the other hand, because $|G_i| = u$ and $|S_{i}| \leq d$, there are less than $u$ defective items in row $\tau$; i.e., $y_{i \tau} = 0$. Because $u \leq |S_i|$, we have $\overline{y}_{i \tau} = 1$, which implies that $y^\prime_{i \tau} = 0$. However, we have $\bigvee_{x \in G_i} m_{\tau x} = \left( \bigvee_{x \in G_i \setminus \{j_1 \} } m_{\tau x} \right) \bigvee m_{\tau j_1} = \left( \bigvee_{x \in G_i \setminus \{j_1 \} } m_{\tau x} \right) \bigvee 1 = 1 \neq 0 = y^\prime_{i \tau}$. Therefore, the condition $\bigvee_{j \in G_i} \cM_j \equiv \bY^\prime_i$ does not hold.
\end{enumerate}
Thus, Step~\ref{alg:eliminateFPs} eliminates false positives. Finally, Step~\ref{alg:defectiveSet} returns the defective item set $S$.

\subsection{The decoding complexity}
\label{sub:decodingComplex}
Because $\cT$ is constructed using $\cG$ and $\cM$, the probability of successful decoding of $\bY$ depends on these choices. Given an input vector $\bY^\prime_i$, we get the set of defective items from decoding of $\cM$. The probability of successful decoding of $\bY$ thus depends only on $\cG$. Since $\cG$ has $\kappa$ rows satisfying (i) and (ii) with probability at least $1 - \epsilon$, all $|S|$ defective items can be identified by using $t = h(2k + 1)$ tests with probability of at least $1 - \epsilon$ for any $\epsilon \geq 0$.

The time to run Steps~\ref{alg:convert2CNAGT} to~\ref{alg:endConverting2CNAGT} is $O(k)$. It takes $\poly(k)$ to run Step~\ref{alg:decodeAll} as in Theorem~\ref{thr:dDisjunct}. Because each column of $\cM$ can be generated in time $O(k^2 \log{N})$, it takes $u \times O(k^2 \log{N})$ time to run Steps~\ref{alg:eliminateFPs} to~\ref{alg:endEliminateFPs}. Because it runs $h$ times for the loop in Step~\ref{alg:scan}, the total decoding complexity is:
\begin{equation}
h \times \left( O(k) + \poly(k) + u \times O(k^2 \log{N}) \right) =  h \times \poly(k). \nonumber
\end{equation}

We summarize the divide and conquer strategy in the following theorem:

\begin{theorem}
\label{thr:general}
Let $2 \leq u \leq d < n$ be integers and $S$ be the defective set. Suppose that an $h \times n$ matrix $\cG$ contains $\kappa$ rows, denoted as $i_1, \ldots, i_\kappa$, such that (i) $|S_{i_1}| = \cdots = |S_{i_\kappa}| = u$ and (ii) $S_{i_1} \cup \ldots \cup S_{i_\kappa} = S$, where $S_{i_l}$ is the index set of defective items in row $\cG_{i_l, *}$. Suppose that an $k \times n$ matrix $\cM$ is an $(d + 1)$-disjunct matrix that can be decoded in time $O(\mathsf{A})$ and each column of $\cM$ can be generated in time $O(\mathsf{B})$. Then an $(2k + 1)h \times n$ measurement matrix $\cT$, as defined in \eqref{eqn:meausrementMatrix}, can be used to identify up to $d$ defective items in $u$-NATGT in time $O(h \times (\mathsf{A} + u \mathsf{B}))$.

The probability of successful decoding depends only on the event that $\cG$ has $\kappa$ rows satisfying (i) and (ii). Specifically, if that event happens with probability at least $1 - \epsilon$, the probability of successful decoding is also at least $1 - \epsilon$ for any $\epsilon \geq 0$.
\end{theorem}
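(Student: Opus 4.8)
The plan is to assemble the statement from the ingredients already in place: the encoding \eqref{eqn:encoding}, the per-row reduction of Lemma~\ref{lem:specialU}, and the case analysis of subsection~\ref{sub:DC}. Concretely, I would run Algorithm~\ref{alg:decodingThreshold} on the outcome vector of the matrix $\cT$ in \eqref{eqn:meausrementMatrix} and argue three things in turn: every defective item is reported, no non-defective item is reported, and the running time is $O(h(\mathsf{A} + u\mathsf{B}))$.

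First I would establish recovery (no false negatives). Conditioned on $\cG$ possessing the $\kappa$ rows $i_1, \ldots, i_\kappa$ satisfying (i) and (ii), each $\bX_{i_l}$ has exactly $u$ ones, so $y_{i_l} = 1$ and the inner loop (Steps~\ref{alg:convert2CNAGT}--\ref{alg:endConverting2CNAGT}) is entered. The block of $\cT$ attached to $\cG_{i_l,*}$ applies the matrix $\mathcal{A}$ of \eqref{eqn:elementaryMatrix} to $\bX_{i_l}$, which is exactly the situation of Lemma~\ref{lem:specialU}; hence the three rules reconstruct $\bY^\prime_{i_l} = \cM \odot \bX_{i_l}$, decoding $\cM$ returns $G_{i_l} = S_{i_l}$ with $|G_{i_l}| = u$, and the check in Step~\ref{alg:eliminateFPs} holds, so $S_{i_l}$ is inserted into $S$. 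By condition (ii), $\bigcup_{l} S_{i_l} = S$, so all defectives are output.

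Next I would verify the absence of false positives, which is the delicate point. For a row with $|S_i| > u$ the decoded candidate $G_i$ of size $u$ need not lie inside $S_i$, and I would reuse the argument of subsection~\ref{sub:DC}: if some $j_1 \in G_i \setminus S_i$, then picking any $j_2 \in G_i \setminus \{j_1\}$ and using that $\cM$ is $(d+1)$-disjunct with $|S_i| \leq d$, there is a row $\tau$ with $m_{\tau j_1} = 1$ and $m_{\tau x} = 0$ for all $x \in S_i \cup \{j_2\}$; this forces $y_{i\tau} = 0$, $\overline{y}_{i\tau} = 1$, hence $y^\prime_{i\tau} = 0$, while $\bigvee_{x \in G_i} m_{\tau x} = 1$. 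Thus $\bigvee_{j \in G_i} \cM_j \not\equiv \bY^\prime_i$ and Step~\ref{alg:eliminateFPs} rejects $G_i$; only candidates with $G_i \subseteq S_i$ survive, so every reported item is genuinely defective.

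Finally, for the complexity the outer loop runs $h$ times, and each iteration costs $O(k)$ for Steps~\ref{alg:convert2CNAGT}--\ref{alg:endConverting2CNAGT}, $O(\mathsf{A})$ for decoding $\cM$ in Step~\ref{alg:decodeAll}, and $u \cdot O(\mathsf{B})$ to regenerate the $u$ columns indexed by $G_i$ and compare in Steps~\ref{alg:eliminateFPs}--\ref{alg:endEliminateFPs}, giving $O(h(\mathsf{A} + u\mathsf{B}))$ in total. For the probability claim, since $\cM$ is a fixed $(d+1)$-disjunct matrix every decoding and every validation is deterministic and exact, so the sole failure event is the absence of the $\kappa$ good rows in $\cG$; hence the success probability equals that of this event, which is at least $1 - \epsilon$ by hypothesis. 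The main obstacle is the false-positive step, since it is the one place where the disjunctness of $\cM$ must be combined with $|S_i| \leq d$ to certify that an inconsistent size-$u$ candidate is always caught by the consistency test.
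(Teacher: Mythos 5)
Your proposal is correct and follows essentially the same route as the paper: you invoke Lemma~\ref{lem:specialU} on the rows with exactly $u$ defectives to get recovery, reproduce the paper's $(d+1)$-disjunctness argument (including the auxiliary index $j_2$) to show Step~\ref{alg:eliminateFPs} rejects any $G_i \not\subseteq S_i$, and account for the running time and the success probability exactly as in subsections~\ref{sub:DC} and~\ref{sub:decodingComplex}. No gaps to flag.
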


\section{Complexity of proposed scheme}
\label{sec:main}
We specify the matrix $\cG$ in Theorem~\ref{thr:general} to get the desired number of tests and decoding complexity for identifying up to $d$ defective items. Note that when $u = d$, the number of defective items should be $u$ (otherwise, every test would yield a negative outcome). In this case, Lemma~\ref{lem:specialU} is sufficient to find the defective items. We consider the following notions of deterministic and randomized decoding. Deterministic decoding is a scheme in which all defective items are found with probability 1. It is achievable when \textbf{every} $h \times d$ submatrix of $\cG$ is $(u, d- u)$-completely separating. Randomized decoding reduces the number of tests, in which all defective items can be found with probability at least $1 - \epsilon$ for any $\epsilon > 0$. It is achieved when \textbf{each} $h \times d$ submatrix is an $(u, d- u)$-completely separating matrix with probability at least $1 - \epsilon$.

\subsection{Deterministic decoding}
\label{sub:detDec}
The following theorem states that there exists a deterministic algorithm for identifying all defective items by choosing $\cG$ of size $h \times n$ to be an $(u, d-u)$-completely separating matrix in Theorem~\ref{thr:existingSeparatingMatrix}.
\begin{theorem}
\label{thr:deterministicMain}
Let $2 \leq u \leq d \leq n$. There exists a $t \times n$ matrix such that up to $d$ defective items in $u$-NATGT can be identified in time $t \times \poly(d^2 \log{n})$, where
\begin{equation}
t = O \left( \left( \frac{d}{u} \right)^u \left( \frac{d}{d -u} \right)^{d - u} \cdot  d^3 \log{n} \cdot \log{\frac{n}{d}} \right) \nonumber
\end{equation}
\end{theorem}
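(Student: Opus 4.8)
The plan is to instantiate the divide-and-conquer framework of Theorem~\ref{thr:general} with a \emph{deterministic} choice of $\cG$, so that the success probability is exactly $1$. I would set $w = d - u$ and invoke Theorem~\ref{thr:existingSeparatingMatrix} to obtain an $h \times n$ $(u, d-u)$-completely separating matrix $\cG$ with
\[
h = \frac{d^d}{u^u (d-u)^{d-u}}\left(d\log\frac{\mathrm{e}n}{d} + u\log\frac{\mathrm{e}d}{u}\right) + 1,
\]
where I have substituted $u + w = d$ and rewritten $\frac{d^d}{u^u(d-u)^{d-u}} = \left(\frac{d}{u}\right)^u\left(\frac{d}{d-u}\right)^{d-u}$. (The boundary case $u = d$ needs no separating matrix at all, since it is already covered by Lemma~\ref{lem:specialU} as noted at the start of this section, so I focus on $u < d < n$.) The key structural fact, recorded immediately after Theorem~\ref{thr:existingSeparatingMatrix}, is that because $\cG$ is $(u,d-u)$-completely separating over all of $[n]$, \emph{every} $h \times d$ submatrix obtained by selecting $d$ columns is again $(u,d-u)$-completely separating: the singular row guaranteed by Definition~\ref{def:separatingMatrix} constrains only the columns of $I \cup J$, so it persists after any of the remaining columns are pruned.

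Next I would argue that this ``every-submatrix'' property forces conditions (i) and (ii) of Theorem~\ref{thr:general} to hold with probability $1$ for \emph{every} defective set $S$ with $|S| \leq d$. This is precisely the construction carried out in the encoding procedure of Section~\ref{sub:enc}: the pruned matrix on the $|S|$ defective columns is $(u,|S|-u)$-completely separating, so I partition $S$ into $\kappa = \lceil |S|/u\rceil$ blocks $P_1,\dots,P_\kappa$ of size $u$ (padding the last block from earlier blocks), and for each block the separating property yields a singular row $i_l$ with $S_{i_l} = P_l$. Hence $\cG$ deterministically contains $\kappa$ rows satisfying (i) and (ii), i.e. the event of Theorem~\ref{thr:general} occurs with $\epsilon = 0$.

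I would then take $\cM$ to be the strongly explicit $(d+1)$-disjunct matrix of Theorem~\ref{thr:dDisjunct}, giving $k = O(d^2\log n)$ with decoding cost $\mathsf{A} = \poly(k)$ and per-column generation cost $\mathsf{B} = O(k^2\log n)$. Feeding $\cG$ and $\cM$ into Theorem~\ref{thr:general} yields a $(2k+1)h \times n$ matrix $\cT$ identifying up to $d$ defectives in $u$-NATGT with probability $1$, in time $O(h(\mathsf{A} + u\mathsf{B})) = h\cdot\poly(k) = h\cdot\poly(d^2\log n)$, which is bounded by $t\cdot\poly(d^2\log n)$ since $h \le (2k+1)h = t$.

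The only remaining work is asymptotic bookkeeping, which I expect to be routine. Using $\log x \le x$ one gets $u\log\frac{\mathrm{e}d}{u} \le u\cdot\frac{\mathrm{e}d}{u} = O(d)$, which is dominated by $d\log\frac{\mathrm{e}n}{d} = O(d\log\frac{n}{d})$ (for $n \ge 2d$), and the additive $1$ is absorbed, so $h = O\!\left(\left(\frac{d}{u}\right)^u\left(\frac{d}{d-u}\right)^{d-u} d\log\frac{n}{d}\right)$. Multiplying by $2k+1 = O(d^2\log n)$ gives
\[
t = (2k+1)h = O\!\left(\left(\frac{d}{u}\right)^u\left(\frac{d}{d-u}\right)^{d-u} d^3\log n\,\log\frac{n}{d}\right),
\]
as claimed. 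There is no genuine obstacle: all the substantive content sits in the correctness of the framework (Theorem~\ref{thr:general}) and in the observation that the completely separating guarantee holds for every submatrix, both of which are already available, so the proof is essentially an assembly step followed by this calculation.
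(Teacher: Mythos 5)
Your proposal is correct and follows essentially the same route as the paper's own proof: choose $\cG$ as the $(u,d-u)$-completely separating matrix of Theorem~\ref{thr:existingSeparatingMatrix}, choose $\cM$ as the $(d+1)$-disjunct matrix of Theorem~\ref{thr:dDisjunct}, assemble $\cT$ via~\eqref{eqn:meausrementMatrix}, and invoke Theorem~\ref{thr:general} with $\epsilon = 0$. If anything, you are slightly more careful than the paper, spelling out why the separating property survives pruning to any $d$ columns, handling the $u=d$ boundary case, and doing the final asymptotic bookkeeping explicitly.
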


\begin{proof}
On the basis of Theorem~\ref{thr:general}, a $t \times n$ measurement matrix $\cT$ is generated as follows:
\begin{enumerate}
\item Choose an $h \times n$ $(u, d-u)$-completely separating matrix $\cG$ as in Theorem~\ref{thr:existingSeparatingMatrix}, where

$h = \left( \frac{d}{u} \right)^u \left( \frac{d}{d -u} \right)^{d - u} \left( d \log{\frac{\mathrm{e}n}{d}} + u\log{\frac{\mathrm{e}d}{u}} \right) + 1$.
\item Choose a $k \times n$ $(d + 1)$-disjunct matrix $\cM$ as in Theorem~\ref{thr:dDisjunct}, where $k = O(d^2 \log{n})$ and the decoding time of $\cM$ is $\poly(k)$.
\item Define $\cT$ as~\eqref{eqn:meausrementMatrix}.
\end{enumerate}

Since $\cG$ is an $h \times d$ $(u, d-u)$-completely separating matrix, for any $|S| \leq d$, an $h \times d$ pruning matrix $\cG^\prime$, which is created by removing $n-d$ columns $\cG_x$ for $x \in [n] \setminus S$, is also an $(u, d-u)$-completely separating matrix with probability 1. From Definition~\ref{def:separatingMatrix}, matrix $\cG^\prime$ is also an $(u, |S| - u)$-completely separating matrix. Then, there exists $\kappa$ rows satisfying (i) and (ii) as described in section~\ref{sub:enc}. From Theorem~\ref{thr:general}, up to $d$ defective items can be recovered using $t = h \cdot O(d^2 \log{n})$ tests with probability 1, in time $h \cdot \poly(k)$.
\end{proof}

\subsection{Randomized decoding}
\label{sub:rndDec}
For randomized decoding, matrix $\cG$ is chosen such that the pruning matrix $\cG^\prime$ of size $h \times d$ created by removing $n - d$ columns $\cG_x$ of $\cG$ for $x \in [n] \setminus S$ is an $(u, d-u)$-completely separating matrix with probability at least $1 - \epsilon$ for any $\epsilon > 0$. This results is an improved number of tests and decoding time compared to Theorem~\ref{thr:deterministicMain}:

\begin{theorem}
\label{thr:randomizedMain}
Let $2 \leq u \leq d \leq n$. For any $\epsilon > 0$, up to $d$ defective items in $u$-NATGT can be identified using
\begin{eqnarray}
t &=& O \left( \left( \frac{d}{u} \right)^u \left( \frac{d}{d - u} \right)^{d-u} \left(u \log{\frac{d}{u}} + \log{\frac{1}{\epsilon}} \right) \cdot d^2 \log{n} \right) \notag
\end{eqnarray}
tests with probability at least $1 - \epsilon$. The decoding time is $t \times \poly(d^2 \log{n})$.
\end{theorem}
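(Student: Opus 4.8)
The plan is to mirror the proof of Theorem~\ref{thr:deterministicMain} in structure, replacing the deterministic completely separating matrix of Theorem~\ref{thr:existingSeparatingMatrix} by the cheaper random matrix of Corollary~\ref{cor:rndMatrix}, and then invoking the divide-and-conquer framework of Theorem~\ref{thr:general}. Concretely, on the basis of Theorem~\ref{thr:general} I would build $\cT$ as in~\eqref{eqn:meausrementMatrix} from two ingredients: first, the random $h \times n$ matrix $\cG$ of Corollary~\ref{cor:rndMatrix} with
\[
h = \left(\frac{d}{u}\right)^u \left(\frac{d}{d-u}\right)^{d-u}\left(u\log\frac{\mathrm{e}d}{u} + \log\frac{1}{\epsilon}\right);
\]
and second, the strongly explicit $(d+1)$-disjunct matrix $\cM$ of Theorem~\ref{thr:dDisjunct} with $k = O(d^2\log n)$, decoding time $\mathsf{A} = \poly(k)$, and per-column generation time $\mathsf{B} = O(k^2\log n)$.

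Next I would verify that, for \emph{a fixed but arbitrary} defective set $S$ with $|S|\le d$, hypotheses (i)--(ii) of Theorem~\ref{thr:general} hold for $\cG$ with probability at least $1-\epsilon$. Choose any set $T\supseteq S$ of exactly $d$ columns (possible since $d<n$) and let $\cG^\prime$ be the induced $h\times d$ submatrix. By Corollary~\ref{cor:rndMatrix}, $\cG^\prime$ is $(u,d-u)$-completely separating with probability at least $1-\epsilon$, hence --- by the monotonicity remark following Definition~\ref{def:separatingMatrix} --- also $(u,|S|-u)$-completely separating. The explicit construction of the $\kappa=\lceil |S|/u\rceil$ singular rows given in Section~\ref{sub:enc} then produces rows $i_1,\dots,i_\kappa$ satisfying (i) and (ii), so that Theorem~\ref{thr:general} yields correct identification of all defectives with probability at least $1-\epsilon$.

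For the parameter count, Theorem~\ref{thr:general} gives $t=(2k+1)h = O(hk)$ tests and decoding time $O\!\left(h(\mathsf{A}+u\mathsf{B})\right) = h\cdot\poly(k)$. Substituting $k=O(d^2\log n)$ and the value of $h$, and simplifying the logarithm via $\log\frac{\mathrm{e}d}{u}=O(\log\frac{d}{u})$, gives the claimed bound on $t$; since $h = \Theta(t/k)$, the decoding time is $t\cdot\poly(k) = t\cdot\poly(d^2\log n)$, as required.

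The one step that deserves care --- the conceptual heart of the argument rather than a calculation --- is the \emph{for-each} nature of the guarantee. The success probability is taken over the random draw of $\cG$ for a single fixed $S$, not simultaneously over all $\binom{n}{d}$ candidate defective sets. This is exactly what distinguishes randomized decoding from Theorem~\ref{thr:deterministicMain}: a union bound over all $S$ would cost an extra $\log\binom{n}{d}=\Theta\!\left(d\log\frac{n}{d}\right)$ factor in $h$ and recover the deterministic bound. I would therefore state explicitly that the model samples $\cT$ independently of the (arbitrary, but fixed) defective set, so that Corollary~\ref{cor:rndMatrix} applies to the single relevant submatrix $\cG^\prime$ and no union bound over $S$ is needed.
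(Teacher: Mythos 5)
Your proposal is correct and follows essentially the same route as the paper's own proof: instantiate Theorem~\ref{thr:general} with the random matrix $\cG$ of Corollary~\ref{cor:rndMatrix} and the $(d+1)$-disjunct matrix $\cM$ of Theorem~\ref{thr:dDisjunct}, argue via the pruned $h \times d$ submatrix that conditions (i) and (ii) hold with probability at least $1-\epsilon$, and multiply out $t = (2k+1)h$. Your explicit remarks on padding $S$ to a set of exactly $d$ columns and on the for-each (no union bound over $S$) nature of the guarantee are careful touches the paper leaves implicit, but they do not change the argument.
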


\begin{proof}
Using Theorem~\ref{thr:general}, a $t \times n$ measurement matrix $\cT$ is generated as follows:
\begin{enumerate}
\item Choose an $h \times n$ matrix $\cG$ as in Corollary~\ref{cor:rndMatrix}, where $h = \left( \frac{d}{u} \right)^u \left( \frac{d}{d - u} \right)^{d-u} \left( u \log{\frac{ \mathrm{e} d}{u}} + \log{\frac{1}{\epsilon}} \right) $.
\item Generate a $k \times n$ $(d + 1)$-disjunct matrix $\cM$ using Theorem~\ref{thr:dDisjunct}, where $k = O(d^2 \log{n})$ and the decoding time of $\cM$ is $\poly(k)$.
\item Define $\cT$ as~\eqref{eqn:meausrementMatrix}.
\end{enumerate}

Let $\cG$ be an $h \times n$ matrix as described in Corollary~\ref{cor:rndMatrix}. Then for any $|S| \leq d$, an $h \times d$ pruning matrix $\cG^\prime$, which is created by removing $n-d$ columns $\cG_x$ for $x \in [n] \setminus S$, is an $(u, d-u)$-completely separating matrix with probability at least $1 - \epsilon$. From Definition~\ref{def:separatingMatrix}, matrix $\cG^\prime$ is also an $(u, |S| - u)$-completely separating matrix. Then, there exist $\kappa$ rows satisfying (i) and (ii) as described in section~\ref{sub:enc} with probability at least $1 - \epsilon$. From Theorem~\ref{thr:general}, all $|S|$ defective items can be recovered using $t = h \cdot O(d^2 \log{n})$ tests with probability at least $1-\epsilon$ and in time $h \cdot  \poly(k)$.
\end{proof}

\section{Simulation}
\label{sec:simul}

In this section, we visualize the decoding times in Table~\ref{tbl:comparison}. For deterministic decoding, the number of items $n$ and the maximum number of defective items $d$ are set to be $\{2^{20}, 2^{30}, 2^{40}, 2^{50}, 2^{60}\}$ and $\{100, 1,000 \}$, respectively. For the randomized algorithm, the number of items $n$ and the maximum number of defective items $d$ are set to be\footnote{%
We note that the parameters are chosen for theoretical benchmarks and do not necessarily reflect the range encountered for practical applications.
} $\{2^{30}, 2^{50}, 2^{100}, 2^{300}, 2^{500}\}$ and $\{10, 100, 1,000 \}$, respectively. The threshold $u$ is set to be $0.2 d$. Finally, the precision $\epsilon$ for randomized algorithms is set to be $10^{-6}$.

\begin{figure}
\centering
\begin{minipage}{.5\textwidth}
  \centering
  \includegraphics[width=1.00\linewidth]{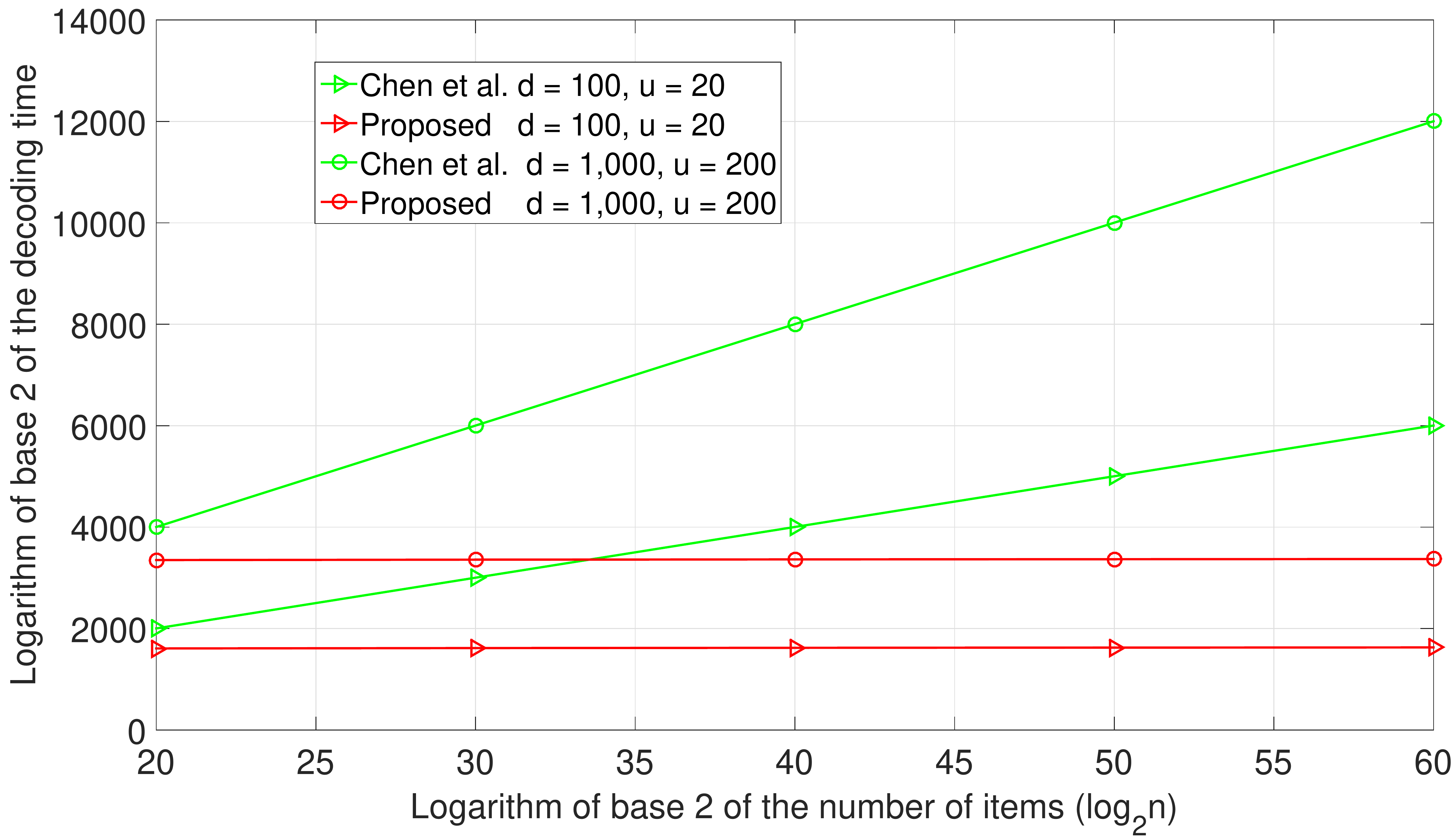}
  \captionof{figure}{Decoding time in deterministic decoding}
  \label{fig:det}
\end{minipage}%
\begin{minipage}{.5\textwidth}
  \centering
  \includegraphics[width=1.00\linewidth]{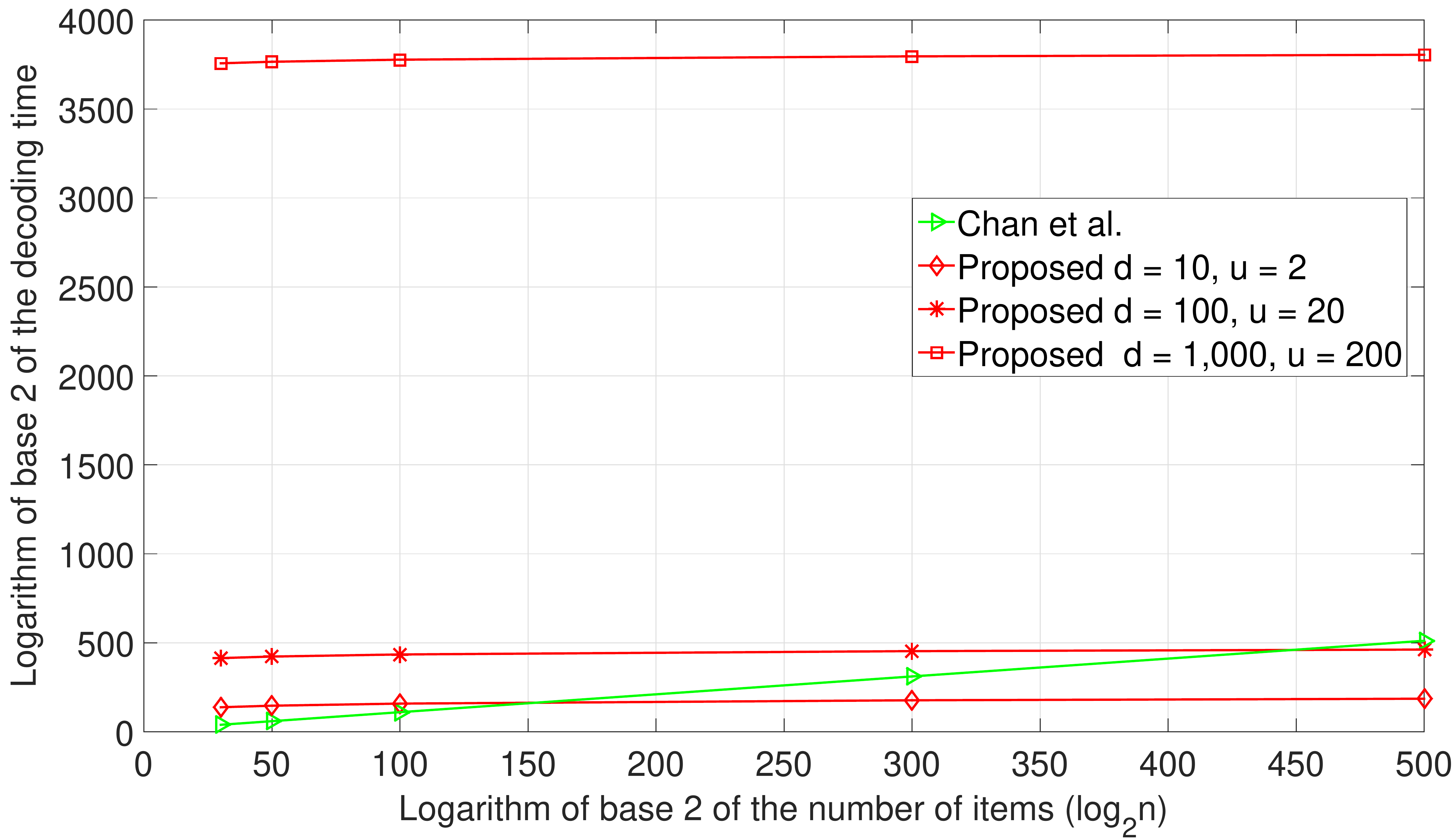}
  \captionof{figure}{Decoding time in randomized decoding}
  \label{fig:rnd}
\end{minipage}
\end{figure}

We see that for deterministic decoding, our proposed scheme is always better than the one proposed by Chen et al.~\cite{chen2009nonadaptive} as shown in Fig.~\ref{fig:det}. However, for randomized decoding, our proposed scheme is better than the one proposed by Chan et al.~\cite{chan2013stochastic} for $d \leq \log{n}$ and large enough $n$, as shown in Fig.~\ref{fig:rnd}. Since the decoding time in~\cite{chan2013stochastic} is not affected much by the parameters $d$ and $u$, we only plot one graph for it. Note that when $n \leq 2^{60}$, the decoding time in our proposed scheme is worse than the one in~\cite{chan2013stochastic}. The main reason is that the decoding time of a $d$-disjunct matrix in Theorem~\ref{thr:dDisjunct} is high, i.e., $O(d^{11} \log^{17}{n})$. Therefore, if there exists any $d$-disjunct matrix with low decoding complexity, e.g., $O(d^2 \log^2{n})$, our proposed scheme would be much better than the one in~\cite{chan2013stochastic} when the number of items $n$ is small.

\section{Conclusion}
\label{sec:cls}

We introduced an efficient scheme for identifying defective items in NATGT. Its main idea is to convert the test outcomes in NATGT to CNAGT by distributing defective items into tests properly. Then all defective items are identified by using some known decoding procedure in CNAGT. However, the algorithm works only for $g = 0$. Extending the results to $g > 0$ is left for future work. Since the number of tests in the randomized decoding is quite large, reducing it is also an important task. Moreover, it would be interesting to consider noisy NATGT as well, in which erroneous tests are present in the test outcomes.

\section{Acknowledgement}
The first author thanks SOKENDAI for supporting him via The Short-Stay Abroad Program 2017.

\bibliographystyle{ieeetr}
\bibliography{bibli}

\end{document}